\newcommand{\mb}{\mathbf}
\newcommand{\mbb}{\mathbb}
\newcommand{\mc}{\mathcal}
\newtheorem{theorem}{\textbf{Theorem}}
\newtheorem{lemma}{\textbf{Lemma}}
\newtheorem{corollary}{\textbf{Corollary}}
\newtheorem{definition}{\textbf{Definition}}
\newtheorem{example}{\textbf{Example}}
\newtheorem{remark}{\textbf{Remark}}
\begin{document}

\title{Parity-Check Matrix Partitioning for Efficient Layered Decoding of QC-LDPC Codes}

\author
{
\IEEEauthorblockN{
Teng~Lu$^\dag$,
Xuan~He$^\dag$,~
Peng~Kang$^\ast$,~\IEEEmembership{Member,~IEEE,}
Jiongyue~Xing$^\star$,
and~Xiaohu~Tang$^\dag$,~\IEEEmembership{Senior Member,~IEEE}}\\
	\IEEEauthorblockA{${}^\dag$School of Information Science and Technology, Southwest Jiaotong University, China\\}
	\IEEEauthorblockA{${}^\ast$Science, Mathematics and Technology (SMT) Cluster, Singapore University of Technology and Design, Singapore}
\IEEEauthorblockA{$^\star$Theory Lab, Central Research Institute, 2012 Labs, Huawei Technology Co. Ltd.}
	Email: luteng@my.swjtu.edu.cn, xhe@swjtu.edu.cn, peng$\_$kang@sutd.edu.sg, xing.jiongyue@huawei.com, xhutang@swjtu.edu.cn
\thanks{Part of this work has been published in 2021 GlobeCom Workshops.}
}

\maketitle

\begin{abstract}
In this paper, we consider how to partition the parity-check matrices (PCMs) to reduce the hardware complexity and computation delay for the row layered decoding of quasi-cyclic low-density parity-check (QC-LDPC) codes.
First, we formulate the  PCM partitioning as an optimization problem, which targets to minimize the maximum column weight of each layer while maintaining a block cyclic shift property among different layers.
As a result, we derive all the feasible solutions for the problem and propose a tight lower bound $\omega_{LB}$ on the minimum possible maximum column weight to evaluate  a solution.
Second, we define a metric called layer distance to measure the data dependency between consecutive layers and further illustrate how to identify the solutions with desired layer distance from those achieving the minimum value of $\omega_{LB}=1$, which is preferred to reduce computation delay.
Next, we demonstrate that up-to-now, finding an optimal solution for the optimization problem with polynomial time complexity is unachievable.
Therefore, both enumerative and greedy partition algorithms are proposed instead.
After that, we modify the quasi-cyclic progressive edge-growth (QC-PEG) algorithm to directly construct PCMs that have a straightforward partition scheme to achieve $\omega_{LB} $ or the desired layer distance.
Simulation results showed that the constructed codes have better error correction performance and smaller average number of iterations than the underlying 5G LDPC code.

\begin{IEEEkeywords}
layered decoding, parity-check matrix (PCM) partitioning, quasi-cyclic low-density parity-check (QC-LDPC) code, quasi-cyclic progressive edge-growth (QC-PEG).
\end{IEEEkeywords}

\end{abstract}



\IEEEpeerreviewmaketitle


\section{Introduction}
	Low-density parity-check (LDPC) codes \cite{Gallager62} have been widely adopted in many applications such as wireless communications  and data storage systems \cite{5gChannel1,Fang18}, for their capacity approaching performance with iterative message passing decoding \cite{Richardson01capacity}.
	To implement the decoding of LDPC codes, a schedule is required to determine the order of updating the messages.
	One well-known message-passing schedule is the flooding schedule \cite{Kschischang1998fsd}, where all the variable-to-check (V2C) messages and the check-to-variable (C2V) messages are updated simultaneously and propagated along the edges in the Tanner graph \cite{Tanner81}.
	However, it requires a large-scale parallelism, which leads to a high decoding complexity and memory requirement for hardware implementation.

	To simplify hardware implementation and reduce the storage memories, the layered schedules were proposed in \cite{Mansour2003cnLayer,Hocevar2004cnLayer,Zhang2009cnlayer,Zhang2005vnLayer,Aslam2017vnLayer}, which update the messages in a sequential order by using the latest information.
	In particular, the row layered schedules in \cite{Mansour2003cnLayer,Hocevar2004cnLayer,Zhang2009cnlayer} operate as a sequence of check node (CN) updates while the column layered schedules in \cite{Zhang2005vnLayer} and \cite{Aslam2017vnLayer} perform variable node (VN) updates sequentially.
	As shown in \cite{Hocevar2004cnLayer}, the layered schedules can achieve faster convergence speed with significantly reduced memory requirement compared to the flooding schedule.
	However, the conventional layered schedules, e.g., \cite{Mansour2003cnLayer, Hocevar2004cnLayer,Zhang2005vnLayer}, are initially developed to process only one row/column of the parity-check matrix (PCM) at each layer, which causes a reduction of decoding throughput.

	To improve the throughput, partial parallelism have been introduced in \cite{Ueng07,Studer08,Cui08,Cui09,Sun11,Hasani19,Hasani21} for  quasi-cyclic LDPC (QC-LDPC) codes \cite{Fossorier04}.
The idea is to  process more rows/columns of the PCM in parallel at each layer.
	More specifically, to decode a QC-LDPC code with partial parallelism, the rows or columns of its PCM are divided into several groups/submatrices corresponding to different layers.
	At each layer, all rows or columns are processed in parallel, and the decoding conducts layer by layer.
	As shown in \cite{Cui08,Cui09,Sun11}, the way of partitioning the PCM  can significantly affect the hardware complexity and throughput of the decoder.
In \cite{Cui08} and \cite{Cui09}, a block cyclic shift property  among different layers is exploited to make the hardware efficiently reused between consecutive layers during the decoding process.
In addition, the maximum column/row weight of each layer is further minimized to reduce both the implementation complexity and the logic delay for the computation \cite{Cui08}.
	Nevertheless, the block cyclic shift property has not been systematically investigated for partitioning PCMs, e.g., the optimal partition scheme in terms of minimizing the maximum column/row weight of each layer is not clear.
	With massive data rate requirement for the communication systems in the fifth-generation (5G) and beyond, it is challenging for the LDPC decoders to achieve ultra-high throughput with cost-effective hardware implementation.
	As one of the practical solutions, the PCM partitioning is necessary to be systematically investigated.

In this paper, we consider how to partition the PCMs to reduce the hardware complexity and computation delay for the row layered decoding of QC-LDPC codes.
The main contributions of this paper are summarized below:
\begin{itemize}
\item We formulate the PCM partitioning as an optimization problem and propose the PCM partitioning principle to minimize the maximum column weight of each layer while maintaining a block cyclic shift property among different layers.
Notably, we derive all feasible solutions (partition schemes) for the problems and propose a tight lower bound $\omega_{LB}$ for the minimum possible maximum column weight to evaluate the quality of a solution.
\item Among the solutions which achieve the minimum possible value of the lower bound $\omega_{LB}=1$, we observe that, they may have different computation delay due to the data dependency issue between consecutive layers.
With respect to that, we define a metric, called layer distance, to quantify the above issue. We further illustrate how to identify the solutions with desired layer distance from those  solutions that achieve $\omega_{LB}=1$.
\item We demonstrate our attempt to find an optimal solution for the optimization problem by connecting it to a classic graph theory problem, and show that up-to-now, there exist no algorithms to obtain an optimal solution with polynomial time complexity.
Therefore, an enumerative partition algorithm and a greedy partition algorithm are proposed as alternatives.
\item For some cases, it may be too time-consuming to find a solution or there are no solutions achieving $\omega_{LB}$ or the desired layer distance.
Thus, we modify the quasi-cyclic progressive edge-growth (QC-PEG) algorithm \cite{Li04QCLDPC,He18PEG} to directly construct PCMs that have a straightforward solution to achieve $\omega_{LB}$ or the desired layer distance.
\item
We evaluate the performance of the proposed enumerative and greedy algorithms for partitioning the 5G LDPC codes.
There exist cases that $\omega_{LB}$ or the desired layer distance is not achieved.
Then, under the same code parameters as a 5G LDPC code, we use the modified QC-PEG algorithm to construct two QC-LDPC codes to achieve $\omega_{LB}$ and desired layer distance, respectively.
Simulation results show that the constructed codes have better error correction performance and achieve smaller average number of iterations than the 5G LDPC code.
\end{itemize}

	The remainder of this paper is organized as follows.
	Section II introduces the preliminaries of QC-LDPC codes and the row layered schedule based on the sum-product algorithm (SPA) \cite{Mansour2003cnLayer}.
	Section III formulates the PCM partitioning problem and then Section IV illustrates the characterization of it.
	%
	%
    Section V develops two algorithms to solve the PCM partitioning problem.
    The proposed greedy and enumerative algorithms for solving the PCM partitioning problem are presented in Section VI.
    Section VI develops a modified QC-PEG algorithm to design QC-LDPC codes that have a straightforward partition scheme to achieve $\omega_{LB}$ or a desired layer distance.
    Section VII investigates the partition schemes obtained by the proposed greedy and enumerative algorithms for the 5G LDPC codes and also presents the error correction performance and convergence behavior of the QC-LDPC codes constructed by the modified QC-PEG algorithm.
Finally, Section VIII concludes our work.

\emph{Notations}: We use uppercase and lowercase letters to represent global and local variables, respectively, e.g., $M$ and $m$.
Use bold face  uppercase and lowercase letters to respectively represent matrices and vectors, e.g., $\mb{H}$ and $\mb{y}$.
Use calligraphic letters to  represent sets, e.g., $\mc{G}$.
Use Greek letters to represent functions, e.g., $\phi$.
For nonnegative integers $m$ and $n$,
use $[m,n)$ to represent the integer set $\{m, m+1, \ldots, n-1\}$.
Moreover, if $m=0$, we simplify it to $[n)$.
Use $\mbb Z$ to represent the set of all integers.
For positive integers $a$ and $b$, use $a|b$ to represent that $b$ is divisible by $a$.

\section{Preliminaries}

\subsection{QC-LDPC Codes}

Let $\mb x=(x_0,\ldots, x_{z-1})$ be a vector of length $z$.
For any integer $s$, the $s$-cyclic (right) shift of $\mb x$  is defined as the vector
$$\lambda^s(\mb x)=(x_{(-s) \bmod z}, x_{(1-s) \bmod z}, ..., x_{(z-1-s) \bmod z}).$$
Assume that $\mb C$ is a matrix containing $z$ rows.
For $i \in [z)$, denote $\mb C[i]$ as the $i$-th row of $\mb C$.
Then $\mb C$ is said to be a circulant if $\mb C$ is a $z \times z$ square matrix and $\mb C[i] = \lambda^i(\mb C[0])$, $\forall i \in [z)$.

A QC-LDPC code belongs to the class of structured LDPC codes, with parity-check matrix consisting of $M \times N$ circulants as follows:
    \begin{eqnarray}
	\mathbf{H}=\left[\begin{array}{ccc}\label{Eqn_H_Matrix}
	\mb H_{0,0} & \cdots& \mb H_{0, N-1} \\
	\vdots  & \ddots & \vdots \\
	\mb H_{M-1, 0} & \cdots & \mb H_{M-1, N-1}
	\end{array}\right],
	\end{eqnarray}
where $\mb H_{m,n}$  is a  $Z \times Z$  circulant for $m \in [M)$ and  $n\in [N)$.
 As a result, $\mb H$ can be equivalently described by a corresponding base matrix $\mb{B}$:
    \begin{eqnarray}\label{Eqn_B_Matrix}
	\mathbf{B}=\left[\begin{array}{cccc}
	 \mb b_{0,0}  & \cdots &  \mb b_{0, N-1} \\
	\vdots &  \ddots & \vdots \\
	 \mb b_{M-1, 0} & \cdots &  \mb b_{M-1, N-1}
	\end{array}\right].
	\end{eqnarray}
In \eqref{Eqn_B_Matrix}, for $m \in [M)$ and  $n\in [N)$, $\mb b_{m,n}$ is an integer vector: if  $\mb b_{m,n} = (-1)$, it indicates that $\mb H_{m,n}$ is a zero matrix;
otherwise, the elements in $\mb{b}_{m,n}$
form a subset of $[Z)$ and specify the positions of the non-zero entries in the first row of $\mb H_{m,n}$.
Particularly, when $\mb b_{m,n}$ only contains a single integer, it corresponds to a zero matrix or a circulant permutation matrix.
 We hereafter remove the parentheses for simplicity, e.g., $\mb b_{m,n}=-1$.

For example, consider $Z = 4$ and the base matrix
 \begin{equation}
 \mathbf{B}=\left[\begin{array}{lll}
 1 & 3 & -1 \\
 0 & 2 & 0
 \end{array}\right].
 \end{equation}
The PCM $\mb{H}$ corresponding to $\mb{B}$ is given below:
 \begin{align}\label{eqn: PCM}
 \setstretch{0.95}
 \mathbf{H} =
 \left[\begin{array}{llll|llll|llll}
 0 & 1 & 0 & 0 & 0 & 0 & 0 & 1 & 0 & 0 & 0 & 0 \\
 0 & 0 & 1 & 0 & 1 & 0 & 0 & 0 & 0 & 0 & 0 & 0 \\
 0 & 0 & 0 & 1 & 0 & 1 & 0 & 0 & 0 & 0 & 0 & 0 \\
 1 & 0 & 0 & 0 & 0 & 0 & 1 & 0 & 0 & 0 & 0 & 0 \\\hline
 1 & 0 & 0 & 0 & 0 & 0 & 1 & 0 & 1 & 0 & 0 & 0 \\
 0 & 1 & 0 & 0 & 0 & 0 & 0 & 1 & 0 & 1 & 0 & 0 \\
 0 & 0 & 1 & 0 & 1 & 0 & 0 & 0 & 0 & 0 & 1 & 0 \\
 0 & 0 & 0 & 1 & 0 & 1 & 0 & 0 & 0 & 0 & 0 & 1
 \end{array}\right].
 \end{align}

Considering the Tanner graph \cite{Tanner81} representation for $\mb H$,
we denote $${\mc G=(\mc V,\mc E)=(\mc V_v \cup \mc V_c,\mc E)}$$ as the Tanner Graph of the LDPC code, where $\mc V_c = \{c_m: m\in [MZ)\}$ denotes the CN set, $\mc V_v = \{v_n: n \in [NZ)\}$ denotes the VN set and $\mc E \subseteq \mc V_c \times \mc V_v$ denotes the edge set.

In the rest of this paper, we consider the problem of partitioning the $\mb{H}$ given in (1).
In the following, we further introduce some concepts related to $\mb H$.

\begin{itemize}
\item \textbf{Row matrix}:  For an ordered subset $\mc A \subseteq [MZ)$, define $\mb{H}_{\mc{A}}$ as the matrix  orderly formed by the rows of $\mb H$, with indices in $\mc A$.

Taking the $\mb{H}$ in \eqref{eqn: PCM} as an example, we have
\begin{equation}\label{Eqn_H68}
\mathbf{H}_{\{5,7\}}=
\left[\begin{array}{llll|llll|llll}
\setstretch{0.6}
0 & 1 & 0 & 0 & 0 & 0 & 0 & 1 & 0 & 1 & 0 & 0 \\
0 & 0 & 0 & 1 & 0 & 1 & 0 & 0 & 0 & 0 & 0 & 1
\end{array}\right].
\end{equation}

\item \textbf{Block cyclic shift}:
For $m \in [M), i\in [Z)$ and an arbitrary integer $s$, the block $s$-cyclic shift of $\mb H_{m}[i]\triangleq(\mb H_{m,0}[i],\dots,\mb H_{m,N-1}[i])$ is defined as the vector  $\phi^s(\mb H_{m}[i])=(\lambda^s(\mb H_{m,0}[i]),\dots,\lambda^s(\mb H_{m,N-1}[i]))$.
Moreover, for a matrix $\mb H_{\mc A}$, $\phi^s(\mb H_{\mc A})$ is to apply the block cyclic shift operation on each row of $\mb H_{\mc A}$.

It is easy to see that $\mb H$ satisfies the block cyclic shift property:
For each $m \in [M)$, $\mb H_{m}[i]=\phi^{i}(\mb H_{m}[0])$ for any $i \in [Z)$.

\item \textbf{Maximum column weight}: For $i \in [NZ)$, denote $\omega_i(\mb H)$ as the Hamming weight of the $i$-th column  of $\mb H$.
Denote $\omega(\mathbf{H})$ as the maximum column weight of matrix $\mathbf{H}$, i.e., $\omega(\mathbf{H})= \max\limits_{i \in [NZ)}\omega_i(\mb H)$.
For example,  $\omega_0(\mb{H}_{\{5,7\}}) = 0$  and $\omega(\mathbf{H}_{\{5,7\}}) = 1$ in \eqref{Eqn_H68}.
\end{itemize}

\subsection{Layered Decoding Algorithm}\label{section: layered decoding}

In this paper, we concentrate on partitioning the rows of $\mb H$ for the row layered decoding.
However, our partition methods can be  applied to the column layered decoding over $\mb H$ as well.

The rows of $\mb H$ are partitioned into $L$ groups/submatrices,
where the sets of the row indices in each group
are respectively denoted by $\mc{T}_0,\dots, \mc{T}_{L-1}$,
and satisfy  the following constraint:

 \textbf{Constraint 1:} $\mc{T}_0,\dots, \mc{T}_{L-1}$ satisfies:
\begin{eqnarray*}
  &&\mc{T}_0\cup\cdots\cup \mc{T}_{L-1}=[MZ), \\
  &&\mc{T}_i \cap \mc{T}_j=\emptyset, \forall i,j \in [L), i \neq j.
\end{eqnarray*}
For each $l \in [L)$,  $\mb H_{\mc{T}_l}=[\mb H_{\mc{T}_l,0}  \cdots \mb H_{\mc{T}_l, N-1}]$ is  called the $l$-th layer.
The decoder operates layer after layer by  processing all rows of each layer  in parallel based on certain message passing algorithm.

		\begin{table}[t!]
		\begin{algorithm}[H]
			\caption{Row Layered Sum-Product Algorithm}
			\label{algo: Layer}
			\begin{algorithmic}[1]
				\REQUIRE Received LLR values $\mathbf{r} = (r_0, r_1,\ldots, r_{NZ-1})$.
				\ENSURE Hard decision estimates $\hat{\mb u}=(\hat{u}_{0}, \hat{u}_{1},\ldots, \hat{u}_{NZ-1})$.
				\STATE{\textbf{Initialization:}}
				\STATE{$\delta_{m,n}^{0} = 0$,$\forall m \in [MZ), n \in \mathcal{N}(c_m)$}.
				\STATE{$\Lambda_n = r_n$, $\forall n \in [NZ)$}.
				\FOR {$e=1$ to $I_\text{max}$}
				\FOR {$l=0$ to $L-1$}

				\STATE {Compute $\mu_{n,m} = \Lambda_n - \delta_{m,n}^{e-1} $ for each $m\in \mc T_l$ and each $n \in \mathcal{N}(c_m)$ in parallel}.
				%
				
				%
				%
				\STATE {Compute
$$
\delta_{m,n}^{e}=2 \tanh ^{-1}\left(\prod_{{n'} \in \mathcal{N}(c_m)\backslash n} \tanh \left(\frac{\mu _{n',m}}{2}\right)\right)
$$
for each $m\in \mc T_l$ and each $n \in \mathcal{N}(c_m)$ in parallel}.
				\STATE {Update $\Lambda_n = \Lambda_n + \delta_{m,n}^{e}-\delta_{m,n}^{e-1}$ for all $m\in \mc T_l$ and each $n \in \mathcal{N}(c_m)$}.
				%
				\ENDFOR
				\STATE{${{\hat u}_n} = \left\{ {\begin{array}{*{20}{l}}
						{0,}&{\Lambda _n \ge 0,}\\
						{1,}&{\text{otherwise.}}
						\end{array}} \right.$}
                \STATE{If $\hat {\mb u}$ is a codeword, stop decoding}.
				\ENDFOR

			\end{algorithmic}
		\end{algorithm}
		\vspace{-1cm}
	\end{table}

Let $\mu_{n,m}$ denote the V2C message sent from VN $v_n$ to CN $c_m$, where $n\in [NZ)$ and $m\in [MZ)$.
Let $\delta_{m,n}^e$ denote the C2V message passed from CN $c_m$ to VN $v_n$ in the $e$-th iteration.
The log-likelihood ratio (LLR) of a posterior probability for VN $v_n$ is represented by $\Lambda_n$, where $n\in [NZ)$.
We represent the index set of CNs in the graph connected to $v_n$ by $\mathcal{N}(v_n)$ and the index set of VNs connected to $c_m$ by $\mathcal{N}(c_m)$.
Denote the index set of all VNs connected to CN $c_m$ except $v_n$ by $\mathcal{N}(c_m)\backslash n$.
Assume that a codeword $\mathbf{u} = (u_0, u_1,\ldots,u_{NZ-1})$ is transmitted and $\mathbf{r} = (r_0,r_1, \ldots,r_{NZ-1})$ is the received LLR values from the channel output with $r_n$ for node $v_n$.

The row layered decoding based on the sum-product algorithm (SPA) \cite{Mansour2003cnLayer} is summarized in Algorithm 1.
In practice, it is very desirable that the hardware can support both the decoding parallelism within a layer and reusability between consecutive layers for achieving high throughput.
In addition, the power efficiency of the decoder is also of great concern for applications with limited power supply such as the Internet of things.
We discuss the details as follows.

\textbf{Parallelism:}
As shown in Lines 6-8 of Algorithm 1, the decoder processes all rows in
parallel within a layer.
According to Line 8, for different $n \in [NZ)$, the computations of $\Lambda_n$s are independent.
Thus, the computations of $\Lambda_n$s can be carried out in parallel for each VN.
On the other hand, for each $n \in [N)$, the computation of $\Lambda_n$ is carried out for $\omega_n(\mb H_{\mc T_l})$ times, which are desired to be carried out in serial.
Therefore, for each layer, the computing delay depends on the VN with the largest column weight.
To reduce the implementation complexity and logical delay of the decoding,  it is necessary to minimize the maximum column weight for each layer.

\textbf{Reusability:} 
To simplify the hardware implementation, reusing the hardware architecture of the first layer is preferred in practice.
In \cite{Cui08,Cui09}, a reuse method between consecutive layers are summarized below.

\begin{definition}[\cite{Cui08,Cui09}]
Let $\mb H$ be the PCM  in (1).
The $L$ layers $\mb H_{\mc T_0}, \mb H_{\mc T_1}, ..., \mb H_{\mc T_{L-1}}$ are said to  have the  block $S$-cyclic shift property if $\mb H_{\mc{T}_l}=\phi^{lS}(\mb H_{\mc{T}_0})$
for all $l \in [L)$.
\end{definition}

Note that if the $L$ layers have the block cyclic shift property, only setting of 0-th layer $\mb H_{\mc T_0}$ requires to be configured specifically for the hardware architecture,
while the other layers can reuse the hardware by  a fixed shift operator transforming it into the pervious one.
Consequently, the hardware reuse can be greatly simplified, compared to a partition scheme without the block cyclic shift property \cite{Cui08,Cui09}.

\textbf{Power efficiency:}
The power consumption of decoding a layer is proportional to the number of its non-zero entries.
Accordingly, the peak (resp. average) power consumption corresponds to the largest (resp. average) number of non-zero entries of each layer.
Thus, the PCM partition scheme with block cyclic shift property can achieve  high power efficiency because its peak-to-average power ratio can be minimized to 1 for the decoder.
\section{Problem Formulation of PCM Partitioning}\label{section: principle}

\subsection{Basic Optimization Problem}

Let $\mb{H}$ be the PCM of a QC-LDPC code with size $MZ \times NZ$  given in (1).
Suppose that $\mb{H}$ satisfies the following two properties:
\begin{itemize}
\item There are no zero rows  otherwise we can remove them; and
\item $\mb{H}$ does not contain two identical rows otherwise there are a large amount of 4-cycles greatly reducing performance.
\end{itemize}

Given integer $L>1$, our goal is to partition the rows of $\mb{H}$ into $L$ layers such that Constraint 1 is satisfied and
the $L$ layers have
\begin{itemize}
\item  block cyclic shift property; and
\item  minimum maximum column weight.
\end{itemize}
This can benefit the row layered decoding by reducing the decoding latency within a layer and  simplifying the hardware reusability between consecutive layers.
Specifically, we formulate the partitioning problem as the following optimization problem.

 \textbf{Problem 1:}   $\min\limits_{S > 0, \,\mc{T}_0,\mc{T}_1,\dots,\mc{T}_{L-1}} \,\,\omega (\mb{H}_{\mc{T}_0})$
 \begin{IEEEeqnarray}{l}
 \,s.t. \quad \text{Constraint 1 is satisfied, and}\notag\\
 \quad \mb{H}_{\mc{T}_l}=\phi^{lS}( \mb{H}_{\mc{T}_0}), \forall l \in [L).\label{optimize:condition3}
\end{IEEEeqnarray}
Constraint 1 guarantees that  the $L$ layers form a partition of $\mb H$,
and \eqref{optimize:condition3} ensures its block $S$-cyclic shift property,

Note from \eqref{optimize:condition3} that  $\mc T_1,\dots, \mc T_L$ are determined by $S$ and $\mc T_0$,
and all the layers have the same maximum column weight.
Therefore,  we can use $(S,\mc T_0)$ to represent a partition scheme uniquely and
just focus on the maximum column weight of $\mb {H}_{\mc{T}_0}$, i.e. $\omega (\mb {H}_{\mc{T}_0})$.
Any $(S,\mc T_0)$ satisfying Constraint 1 and \eqref{optimize:condition3} is called a feasible solution to Problem 1.
A feasible solution $(S^*,\mc T^*_0)$ is called the optimal solution if $w(\mb H_{\mc T^*_0}) = \min\limits_{S > 0, \,\mc{T}_0,\mc{T}_1,\dots,\mc{T}_{L-1}} \,\,\omega (\mb{H}_{\mc{T}_0})$.

\subsection{Data Dependency}

The optimal solution to Problem 1 provides an efficient PCM partition scheme by configuring the block cyclic shift value $S$ and the set of row indices $\mc T_0$, which can well support the parallelism, reusability, and power efficiency of a layered LDPC decoder.
However, it is possible to further optimize the PCM partition scheme from data dependency aspect for achieving shorter computation delay \cite{Zhang2009cnlayer}.
Data dependency occurs in the message update process when two CNs
have the same neighbour VN.
The following is an illustrative example.

\begin{example}\label{Examp_1}
Consider $S=1$, $L=4$ and $\mb H$ in \eqref{eqn: PCM}.
Note that $\omega (\mb{H}_{\mc{T}_0})\ge 1$, otherwise $\mb H_{\mc T_0}$ will be a zero matrix.
It is easy to verify that $(S,\mc T_0)=(1,\{0,4\})$ and $(S',\mc T'_0)=(1,\{0,7\})$ are two optimal solutions to Problem 1.
whose  partitioned matrices  are as follows.

 \begin{align*}
 \setstretch{0.9}
 \setlength{\arraycolsep}{4pt}
 \mathbf{H}_{\mc T_0} =\mathbf{H}_{\{0,4\}}=
 \left[\begin{array}{llll|llll|llll}
 0 & 1 & 0 & 0 & 0 & 0 & 0 & 1 & 0 & 0 & 0 & 0 \\
 1 & 0 & 0 & 0 & 0 & 0 & 1 & 0 & 1 & 0 & 0 & 0
 \end{array}\right],
 \end{align*}

  \begin{align*}
 \setstretch{0.9}
 \setlength{\arraycolsep}{4pt}
 \mathbf{H}_{\mc T_1} =\mathbf{H}_{\{1,5\}}=
 \left[\begin{array}{llll|llll|llll}
  0 & 0 & 1 & 0 & 1 & 0 & 0 & 0 & 0 & 0 & 0 & 0 \\
 0 & 1 & 0 & 0 & 0 & 0 & 0 & 1 & 0 & 1 & 0 & 0 \\
 \end{array}\right],
 \end{align*}

   \begin{align*}
 \setstretch{0.9}
 \setlength{\arraycolsep}{4pt}
 \mathbf{H}_{\mc T_2} =\mathbf{H}_{\{2,6\}}=
 \left[\begin{array}{llll|llll|llll}
  0 & 0 & 0 & 1 & 0 & 1 & 0 & 0 & 0 & 0 & 0 & 0 \\
 0 & 0 & 1 & 0 & 1 & 0 & 0 & 0 & 0 & 0 & 1 & 0 \\
 \end{array}\right],
 \end{align*}

   \begin{align*}
 \setstretch{0.9}
 \setlength{\arraycolsep}{4pt}
 \mathbf{H}_{\mc T_3} =\mathbf{H}_{\{3,7\}}=
 \left[\begin{array}{llll|llll|llll}
  1 & 0 & 0 & 0 & 0 & 0 & 1 & 0 & 0 & 0 & 0 & 0 \\
 0 & 0 & 0 & 1 & 0 & 1 & 0 & 0 & 0 & 0 & 0 & 1 \\
 \end{array}\right].
 \end{align*}

\begin{align*}
 \setstretch{0.9}
 \setlength{\arraycolsep}{4pt}
 \mathbf{H}_{\mc T'_0} =\mathbf{H}_{\{0,7\}}=
 \left[\begin{array}{llll|llll|llll}
 0 & 1 & 0 & 0 & 0 & 0 & 0 & 1 & 0 & 0 & 0 & 0 \\
  0 & 0 & 0 & 1 & 0 & 1 & 0 & 0 & 0 & 0 & 0 & 1
 \end{array}\right],
 \end{align*}

  \begin{align*}
 \setstretch{0.9}
 \setlength{\arraycolsep}{4pt}
 \mathbf{H}_{\mc T'_1} =\mathbf{H}_{\{1,4\}}=
 \left[\begin{array}{llll|llll|llll}
  0 & 0 & 1 & 0 & 1 & 0 & 0 & 0 & 0 & 0 & 0 & 0 \\
 1 & 0 & 0 & 0 & 0 & 0 & 1 & 0 & 1 & 0 & 0 & 0 \\
 \end{array}\right],
 \end{align*}

   \begin{align*}
 \setstretch{0.9}
 \setlength{\arraycolsep}{4pt}
 \mathbf{H}_{\mc T'_2} =\mathbf{H}_{\{2,5\}}=
 \left[\begin{array}{llll|llll|llll}
  0 & 0 & 0 & 1 & 0 & 1 & 0 & 0 & 0 & 0 & 0 & 0 \\
0 & 1 & 0 & 0 & 0 & 0 & 0 & 1 & 0 & 1 & 0 & 0 \\
 \end{array}\right],
 \end{align*}

   \begin{align*}
 \setstretch{0.9}
 \setlength{\arraycolsep}{4pt}
 \mathbf{H}_{\mc T'_3} =\mathbf{H}_{\{3,6\}}=
 \left[\begin{array}{llll|llll|llll}
  1 & 0 & 0 & 0 & 0 & 0 & 1 & 0 & 0 & 0 & 0 & 0 \\
 0 & 0 & 1 & 0 & 1 & 0 & 0 & 0 & 0 & 0 & 1 & 0  \\
 \end{array}\right].
 \end{align*}
\end{example}

In partition scheme 1, the data dependency occurs during the CN update of the nodes $c_0$ and $c_5$  located in layers 0 and 1.
At layer 1,  the $v_1$-to-$c_5$ message needs to be computed (Line 6 in Algorithm 1) based on the  posterior LLR $\Lambda_1$
updated  at  layer 0 (Line 8 in Algorithm 1).
However, due to the delay caused by the pipelining stage \cite{Cui08,Cui09},
one has to pay extra waiting time to get the latest value of $\Lambda_1$.

Whereas partition scheme 2 can avoid data dependency between two consecutive layers,
where all the $\Lambda_n$s ($n \in [12)$) are updated  at most once in any two consecutive layers.
 As a consequence, the computation delay can be reduced  for obtaining the latest  posterior LLRs.

That is,  the second solution leads to a shorter computation delay compared to first one, though they have the same maximum column weight.
In summary, to reduce the  computation delay, the column weight  for the consecutive layers  should be as small as possible.

\subsection{Further Optimization Problem}

In practice, there is often more than one optimal solution to Problem 1, where the data dependency of each solution may vary,
as illustrated in Example \ref{Examp_1}.
To minimize the computation delay, we intend to find out the optimal solution that can decrease the data dependency among  consecutive layers.
The general idea is to separate any two non-zero entries in a column as far as possible, with respect to layer distance.

\begin{definition}[Layer distance]\label{def: layer distance}
Given $\mb H$ and $L$, let $(S,\mc T_0)$ be a feasible solution to Problem 1.
The layer distance of $(S, \mc T_0)$, denoted by $d(S, \mc T_0)$, is defined as the maximum integer $l \in [L)$ such that
   \begin{align}\label{Eqn_LD}
 \omega \left(\left[\begin{array}{c}
  \mathbf{H}_{\mc T_0} \\
\mathbf{H}_{\mc T_1} \\
\vdots \\
\mathbf{H}_{\mc T_{l-1}}
 \end{array}\right]\right)\leq 1.
 \end{align}
\end{definition}

\begin{remark}
1) In general, the layer distance can be defined as the minimum maximum integer $l \in [L)$ such that \eqref{Eqn_LD} still holds
when $\mc T_{0}, \mc T_{1},\ldots, \mc T_{l-1}$ are respectively replaced by $\mc T_{i}, \mc T_{i+1},\ldots, \mc T_{i+l-1}$ where $i$ ranges over $[L)$ and the subscript is modulo $L$. Actually, $i=0$
is enough for our definition because the resultant equation is the same as  \eqref{Eqn_LD}  by means of block cyclic shift property of $\mb H$.

2) $d(S, \mc T_0)=2$ is generally sufficient for practical applications  \cite{Zhang2009cnlayer}.

3) Note that, for $l=0$, we define $\omega(\emptyset) = 0$.
Therefore, we always have $d(S,\mc T_0) \geq 0$.
In particular, we have $d(S, \mc T_0) = 0$ if $\omega(\mb H_{\mc T_0}) > 1$. Accordingly, the data dependency is the strongest, since it occurs within a layer.
\end{remark}

According to Definition \ref{def: layer distance}, a partition scheme with layer distance $k$ can avoid data dependency within any $k$ consecutive layers.
Adding the layer distance into the objective functions, we have the following multi-objective optimization problem.

 \textbf{Problem 2:}   $\min\limits_{S > 0, \,\mc{T}_0} \,\,\omega (\mb{H}_{\mc{T}_0}),-d(S,\mc T_0)$
%
 \begin{IEEEeqnarray}{l}
 \,s.t. \quad \text{Constraint 1 is satisfied, and}\notag\\
 \quad \mb{H}_{\mc{T}_l}=\phi^{lS}( \mb{H}_{\mc{T}_0}), \forall l \in [L),\notag
\end{IEEEeqnarray}

In Problem 2, we mean to both minimize $\omega (\mb{H}_{\mc{T}_0})$ and maximize the layer distance $d(S,\mc T_0)$.
Noting that $d(S,\mc T_0)>1$ only if $\omega(\mb H_{\mc T_0})=1$.
Therefore, we should first guarantee $\omega(\mb H_{\mc T_0})=1$, and then further optimize $d(S,\mc T_0)$.
In this case,  the solution $(S, \mc T_0)$ is also one of optimal solutions to Problem 1.

\subsection{Relationship Between Problems 1 and 2}
The following theorem establishes the connection between Problem 1 and Problem 2.
\begin{theorem}\label{theorem: conflict}
Given $\mb H$, $L$, and a feasible solution $(S,\mc T_0)$,  $d(S,\mc T_0) \geq k$ if and only if (iff) $\mb H^{(S,k)}_{\mc T_0}$ is a binary matrix and $\omega( \mb H^{(S,k)}_{\mc T_0})=1$, where $\mb H^{(S,k)}\triangleq\mb H+\phi^S(\mb H)+\phi^{2S}(\mb H)+\cdots+\phi^{(k-1)S}(\mb H)$.
\end{theorem}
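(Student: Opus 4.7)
The approach is to translate the condition $d(S,\mc T_0)\ge k$ into a columnwise sum condition on the integer matrix $\mb H^{(S,k)}_{\mc T_0}$ by a direct counting argument; once this is done, the equivalence falls out from the fact that entries of a sum of $\{0,1\}$-matrices are non-negative integers.

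The first step is to establish the identity
\[
\mb H^{(S,k)}_{\mc T_0}=\sum_{j=0}^{k-1}\phi^{jS}(\mb H_{\mc T_0})=\sum_{l=0}^{k-1}\mb H_{\mc T_l},
\]
with all sums taken entrywise over $\mbb Z$. The first equality holds because row selection $(\cdot)_{\mc T_0}$ commutes with the row-wise block cyclic shift $\phi^{jS}$, and the second uses the feasibility assumption $\mb H_{\mc T_l}=\phi^{lS}(\mb H_{\mc T_0})$ for each $l\in[L)$. In particular, for every column $j$ and every row $i$ of $\mb H_{\mc T_0}$, the $(i,j)$ entry of $\mb H^{(S,k)}_{\mc T_0}$ equals the number of layers $l\in[k)$ whose $i$-th row carries a $1$ in column $j$. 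Summing over $i$ shows that the column-$j$ entry sum of $\mb H^{(S,k)}_{\mc T_0}$ coincides with the Hamming weight of column $j$ of the stacked matrix $[\mb H_{\mc T_0}^{\top},\ldots,\mb H_{\mc T_{k-1}}^{\top}]^{\top}$ appearing in Definition~\ref{def: layer distance}.

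The theorem then follows by a short integer argument. By the definition of layer distance, $d(S,\mc T_0)\ge k$ is equivalent to saying that every column of the stacked matrix has Hamming weight at most $1$, which by the previous step means that every column of $\mb H^{(S,k)}_{\mc T_0}$ is a non-negative integer vector whose entries sum to at most $1$. Such a column must have at most one non-zero entry and that entry must equal $1$, which is exactly the joint statement that $\mb H^{(S,k)}_{\mc T_0}$ is binary and has maximum column weight at most $1$; and because $\mb H_{\mc T_0}$ inherits the ``no zero rows'' property of $\mb H$, at least one column of $\mb H^{(S,k)}_{\mc T_0}$ contains a $1$, promoting the bound to equality. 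I do not anticipate a genuine obstacle, since the proof is mostly bookkeeping. The only care required is in separating the integer-matrix viewpoint (whose entries can a priori exceed $1$) from the binary column-weight function $\omega(\cdot)$, which is precisely why the statement lists the binary condition and $\omega=1$ as two separate clauses.
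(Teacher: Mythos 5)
Your proof is correct and follows essentially the same route as the paper's: both hinge on the identity $\mb H^{(S,k)}_{\mc T_0}=\sum_{l\in[k)}\mb H_{\mc T_l}$ and the observation that, columnwise, the entry sums of this integer matrix equal the column weights of the stacked matrix in Definition~\ref{def: layer distance}. Your write-up is merely more explicit about the integer-versus-binary distinction and about why the weight is exactly $1$ (no zero rows), points the paper compresses into ``the necessity is obvious.''
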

\begin{proof}
The necessity is obvious because $\mb H$ is a binary matrix of form \eqref{Eqn_H_Matrix}.
Since   \begin{align*}
 \omega( \mb H^{(S,k)}_{\mc T_0})
 =\omega\left(\sum\limits_{i \in [k)}\phi^{iS}(\mb H_{\mc T_0})\right)
 =\omega \left(\left[\begin{array}{c}
  \mathbf{H}_{\mc T_0} \\
\mathbf{H}_{\mc T_1} \\
\vdots \\
\mathbf{H}_{\mc T_{k-1}}
 \end{array}\right]\right)
 \end{align*}
  for binary $\mb H^{(S,k)}$, the sufficiency  then follows from Definition \ref{def: layer distance}.
\end{proof}

According to Theorem \ref{theorem: conflict},
we can get a partition scheme with desired layer distance $k$,
by solving Problem 1 for $\mb H^{(S,k)}$.
Specifically, we first construct $\mb H^{(S,k)}$,
and then find an optimal solution  $(S^*, \mc T^*_0)$ to Problem 2 for $\mb H^{(S,k)}$.
If $\omega (\mb H^{(S,k)}_{\mc T_0^*})=1$, we have $d(S^*,\mc T_0^*)\geq k$;
otherwise it implies that there does not exist  feasible solution with layer distance at least $k$.

Therefore, we mainly consider Problem 1 in what follows.

\section{Characterization of Problem 1}
\subsection{Problem Transformation}

\begin{definition}\label{def: pi}
For an integer $i \in [MZ)$ and an integer $s$, define $\pi^s(i)=Z \cdot \lfloor i/Z \rfloor + ((i+s) \mod Z)$.
Moreover, for an index set $\mc T \subseteq [MZ)$, define $\pi^{s}(\mc T)$ as $ \{\pi^{s}(x) : x \in \mc T\}$.
\end{definition}

\begin{lemma}\label{theorem:pi}
For $i, j \in [MZ)$ and an integer $s$, $\mb H[j]=\phi^s(\mb H[i])$ iff
$j=\pi^s(i)$.
\end{lemma}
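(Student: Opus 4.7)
The plan is to reduce everything to the block-cyclic-shift property of $\mb H$ by writing the row indices in "block + offset" form. Concretely, for any $i\in[MZ)$ I would set $m=\lfloor i/Z\rfloor$ and $r=i\bmod Z$, so that $\mb H[i]=\mb H_m[r]$, and analogously $j=m'Z+r'$ with $r'=j\bmod Z$. With this notation the definition of $\pi^s$ becomes simply $\pi^s(i)=mZ+((r+s)\bmod Z)$, i.e.\ the block index is preserved and the in-block row index is cyclically shifted by $s$ modulo $Z$. Using the stated property $\mb H_m[i]=\phi^i(\mb H_m[0])$, I can then rewrite both $\mb H[i]$ and $\mb H[j]$ as suitable powers of $\phi$ applied to $\mb H_m[0]$ and $\mb H_{m'}[0]$.

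For the "$\Leftarrow$" direction the argument is a one-line computation: if $j=\pi^s(i)$ then $j=mZ+((r+s)\bmod Z)$, so
\[
\mb H[j]=\mb H_m[(r+s)\bmod Z]=\phi^{(r+s)\bmod Z}(\mb H_m[0])=\phi^s\bigl(\phi^r(\mb H_m[0])\bigr)=\phi^s(\mb H[i]),
\]
where I use that $\phi$ has period $Z$ on length-$Z$ blocks.

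For the "$\Rightarrow$" direction I start from $\mb H[j]=\phi^s(\mb H[i])$, which in the expanded notation becomes $\phi^{r'}(\mb H_{m'}[0])=\phi^{r+s}(\mb H_m[0])$, equivalently $\mb H_{m'}[0]=\mb H_m[(r+s-r')\bmod Z]$ by the block-cyclic-shift property again. This is the identity of two actual rows of $\mb H$, namely row $m'Z$ and row $mZ+((r+s-r')\bmod Z)$. Invoking the standing assumption stated at the start of Section III that $\mb H$ contains no two identical rows, these two row indices must coincide; since $mZ+k$ with $k\in[0,Z)$ can equal $m'Z$ only when $m=m'$ and $k=0$, I conclude $m=m'$ and $r'\equiv r+s\pmod Z$, hence $j=\pi^s(i)$.

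The only nontrivial step is the one in the last paragraph where I pass from an equality of shifted rows of different block-rows to an equality of actual rows of $\mb H$, and then kill the cross-block case via the no-duplicate-rows hypothesis. Without that hypothesis the lemma genuinely fails (two block-rows that are cyclic shifts of one another would produce spurious solutions for $j$), so the main job when writing the final proof will be to make this appeal explicit and to check carefully that the computation $\phi^{r'}(\mb H_{m'}[0])=\phi^{r+s}(\mb H_m[0])\Rightarrow \mb H_{m'}[0]=\mb H_m[(r+s-r')\bmod Z]$ is valid (which it is, since $\phi$ is a bijection on length-$NZ$ vectors of the given block structure).
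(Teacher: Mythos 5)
Your proof is correct and follows essentially the same route as the paper's: the forward (sufficiency) direction is the one-line computation from the block cyclic shift identity $\phi^s(\mb H_m[k])=\mb H_m[(k+s)\bmod Z]$, and the reverse direction is settled by invoking the standing assumption that $\mb H$ has no two identical rows. The paper states both steps very tersely; your write-up just makes the block-plus-offset bookkeeping and the appeal to the no-duplicate-rows hypothesis explicit, which is exactly the content the paper leaves implicit.
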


\begin{proof}
According to block cyclic shift property of $\mb H$, we have $\phi^s(\mb H_{m}[k])=\mb H_{m}[(k+s) \mod Z]$.
Then, sufficiency is clear. Whereas, the necessity also follows otherwise it contradicts the assumption that
there are no two identical rows in $\mb H$.
\end{proof}

%

Then, it is  more convenient  to describe Problem 1  in terms of the row indices of $\mb H$.

 \textbf{Problem 3:}   $\min\limits_{S > 0, \,\mc{T}_0} \,\,\omega (\mb{H}_{\mc{T}_0})$
 \begin{IEEEeqnarray}{l}
  \,s.t. \quad \text{Constraint 1 is satisfied, and}\notag\\
 \quad\quad\,\,\,  \mc T_l =\pi^{lS}(\mc T_0), \forall l \in [L)\label{optimize:Tl}.
\label{optimize:condition6}
\end{IEEEeqnarray}


To solve this problem, a general approach is to firstly enumerate all the values of $(S,\mc T_0)$, next check whether it is a feasible solution, and finally obtain
an optimal solution.
It is clear that there are $MZ \choose MZ/L$ options of $\mc T_0$, i.e., the search space is prohibitively large.  Accomplishing this task  is impossible for a large size of $\mb H$.
Instead, we consider to directly characterize all the feasible solutions to Problem 3.
\subsection{Feasible Solutions to Problem 3}

To solve Problem 3,  it suffices to determine $\mc T_0$  satisfying \eqref{optimize:condition6} and Constraint 1. In this subsection,
we characterize all the feasible solutions to Problem 3.
For convenience, denote $\mc T_{l,m}= \mc T_l \cap [mZ, (m+1)Z)$ for each $m \in [M)$.
According to Lemma \ref{theorem:pi},   $\mc T_{l,m}=\pi^{lS}(\mc T_{0,m})$ and
 $\mc T_{0,m}, \ldots , \mc T_{L-1,m}$ form a partition of $[mZ, (m+1)Z)$,
which results in
\begin{itemize}
\item [\textbf{B1}] $|\mc T_{l,m}|=Z/L$,
\end{itemize}
i.e., it is further required that
\begin{itemize}
\item $L$  should be a factor $Z$.
\end{itemize}

We begin with the   $s$-cyclic shift class.
\begin{definition}[$s$-cyclic shift class]
For $i \in [MZ)$ and $s \in [1, Z)$, define the $s$-class (short for $s$-cyclic shift class) generated by $i$ as the set
$ \{i\}_s = \{\pi^{rs}(i):r \in \mbb{Z}\}$.
\end{definition}
\begin{lemma}\label{lemma:number}
Let $(S, \mc T_0)$ be a feasible solution to Problem 3. For any $l \in [L)$ and $i \in [MZ)$,  $i \in \mc T_0$ iff $\pi^{lS} (i) \in \mc T_l$.
\end{lemma}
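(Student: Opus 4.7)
The plan is to prove the equivalence by exploiting the fact that the map $\pi^s$ is a bijection of $[MZ)$, together with the defining constraint $\mc{T}_l = \pi^{lS}(\mc{T}_0)$ from \eqref{optimize:Tl}. So the work really reduces to verifying injectivity of $\pi^{lS}$ and then reading off the two directions from set-image manipulations.

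First I would establish the preliminary fact that, for every integer $s$, the map $\pi^s : [MZ) \to [MZ)$ is a bijection. This is immediate from Definition~\ref{def: pi}: writing $i = Zm + r$ with $m \in [M)$ and $r \in [Z)$, we have $\pi^s(i) = Zm + ((r+s) \bmod Z)$, so $\pi^s$ preserves the block index $m$ and acts as the cyclic shift $r \mapsto (r+s) \bmod Z$ on the residue within each block $[mZ,(m+1)Z)$. Each of these block-level shifts is a permutation of $[Z)$, hence $\pi^s$ is a bijection of $[MZ)$. As an intermediate consequence, $\pi^{lS}(i) = \pi^{lS}(j)$ forces $i = j$ for all $i,j \in [MZ)$.

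Next I would prove the two directions of the equivalence. For the forward direction, assume $i \in \mc{T}_0$. Then by constraint \eqref{optimize:Tl}, $\mc{T}_l = \pi^{lS}(\mc{T}_0) = \{\pi^{lS}(x) : x \in \mc{T}_0\}$, so $\pi^{lS}(i)$ is by definition an element of $\mc{T}_l$. For the reverse direction, assume $\pi^{lS}(i) \in \mc{T}_l$. Again by \eqref{optimize:Tl}, there exists some $j \in \mc{T}_0$ with $\pi^{lS}(j) = \pi^{lS}(i)$. Invoking the injectivity established in the preliminary fact gives $j = i$, so $i \in \mc{T}_0$.

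There is really no substantive obstacle here: the whole argument rests on recognizing that $\pi^s$ acts as a within-block cyclic shift and is therefore a permutation of $[MZ)$, after which the biconditional follows routinely from the defining identity $\mc{T}_l = \pi^{lS}(\mc{T}_0)$. The main thing to be careful about is to note explicitly that $\pi^{lS}$ being a bijection on the full domain $[MZ)$ (not just restricted to $\mc{T}_0$) is what makes the reverse direction work; this is what lets us conclude that the preimage under $\pi^{lS}$ of $\mc{T}_l$ is exactly $\mc{T}_0$ rather than merely containing it.
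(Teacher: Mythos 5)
Your proof is correct and follows the same route as the paper, which simply notes that $\pi^s$ is a bijection (by Definition~\ref{def: pi}) and then invokes \eqref{optimize:Tl}; you have merely spelled out the bijectivity argument and the two directions in more detail. No issues.
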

\begin{proof}
According to Definition \ref{def: pi}, $\pi^s (\cdot)$ is a bijection.
Then, the lemma follows from \eqref{optimize:Tl}.
\end{proof}

Obviously,   $\{i\}_{S}=\{i\}_{S'}$  for all $i \in [MZ)$ with $S'=\gcd(Z, S)$. Thus, hereafter we always assume that
$S$  is a factor of $Z$, i.e., $S|Z$. 

First of all, we focus on the characterization of $\mc T_{0,0}$.  Recall that $\mc T_{0,0}, \ldots , \mc T_{L-1,0}$ form a partition of $[Z)$, each of size
$Z/L$. On the other hand, $S$-classes $\{0\}_S, \ldots , \{S-1\}_S$ form a partition of $[Z)$, each of size $Z/S$. Thus,
$\mc T_{0,0}$ must contain some  elements in  $\{i\}_S$ for some $i\in [S)$.  
Let $j\in \{i\}_S \cap \mc T_{0,0}$, i.e.,  $\{i\}_S=\{j\}_S=\{(j+lS)\bmod\,Z: l=0,\ldots, Z/S-1\}$.
For any $l \in [1,L)$, $\pi^{lS}(j) \in \mc T_{l,0}$ by \eqref{optimize:condition6}, which implies $\pi^{(L-l)S}(j) \notin \mc T_{0,0}$ since $\mc T_{0,0} \cap \mc T_{L-l,0}=\emptyset$.
Then, according to Lemma \ref{lemma:number}, we have $\pi^{LS}(j) \in \mc T_{0,0}$ since $\pi^{LS}(j)=\pi^{(L-l)S+lS}(j)= \pi^{lS}(\pi^{(L-l)S}(j))\notin \mc T_{l,0}$, for any $l \in [1,L)$.
Similarly, $\pi^{2LS}(j),\pi^{3LS}(j),\dots, \pi^{Z-LS}(j)$ are  in $\mc T_{0,0}$, which and $j,\pi^{LS}(j)$ all belong to the same $LS$-class in $\{i\}_S$.
Therefore, given $i\in [S)$,  any element $j\in \{i\}_S$ can be a candidate element for set $\mc T_{0,0}$. Moreover, once the element $j$ is selected into $\mc T_{0,0}$, the remaining elements $\pi^{S}(j),\pi^{2S}(j),\dots, \pi^{Z-S}(j)$ in $\{i\}_S$
are automatically divided into the sets $\mc T_{0,0}, \ldots , \mc T_{0,L-1}$,
as shown in Table I. It is observed from Table I that
\begin{itemize}
\item  $S|(Z/ L)$, i.e., $S$ must be a factor of $Z/L$,
\end{itemize}
otherwise  the cardinality of $\mc T_{0,l}$ contradicts B1 for some  $l\in [L)$.

\begin{table}
\centering
\caption{Elements of $\mc T_{l,0}, \forall l \in [L)$}\label{table:calss}
\begin{tabular}{|c|c|c|c|c|}
\hline
$\mc T_{0,0}$   & $\mc T_{1,0}$   & $\cdots$ &  $\mc T_{L-1,0}$\\
\hline
$j$  & $\pi^S(j)$   &$\cdots$ &  $\pi^ {(L-1)S}(j)$ \\
$\pi ^{LS}(j)$  & $\pi ^{(L+1)S}(j)$   &$\cdots$ &  $\pi ^{(2L-1)S}(j)$ \\
$\vdots$ & $\vdots$ & $\ddots$ & $\vdots$\\
$\pi ^{Z-LS}(j)$  & $\pi ^{Z-(L-1)S}(j)$   &$\cdots$ &  $\pi ^{Z-S}(j)$ \\
\hline
\end{tabular}

\end{table}

Clearly, the $L$  sets $mZ+\{i\}_S=\{mZ+((i+rS)\bmod\,Z)\},\,i=0,\ldots,S-1$, are $S$-classes forming a  partition of $[mZ, (m+1)Z)$, $m\in [M)$. So, the above selection
method can be similarly carried on $[mZ, (m+1)Z)$ to obtain $\mc T_{0,m}$.

Given $S|(Z/L)$ and  $m\in [M)$,
\begin{itemize}
\item $[mZ, (m+1)Z)$ can be partitioned into $S$ different $S$-classes:  $\mc C_{m,0}\triangleq mZ+\{0\}_S,\ldots, \mc C_{m,S-1}\triangleq mZ+\{S-1\}_S$;

\item for each $s\in [S)$, the $S$-class $\mc C_{m,s}$ can be further partitioned into $L$ different $LS$-classes: $\mc C_{m,s,0}\triangleq mZ+s+\{0\}_{LS},\ldots,$ $\mc C_{m,s,L-1}\triangleq mZ+s+\{(L-1)S\}_{LS}$.

\end{itemize}
Then, the set $\mc T_0$ consists of $MS$ different $LS$-classes, each of which is from a different $S$-class, as  characterized by the following theorem.

\begin{theorem}\label{theorem: feasible solution}
$(S,\mc T_0)$ is a feasible solution to Problem 3 iff
$$\mc T_0 \in \{\cup_{m\in [M),s \in [S)} \mc C_{m,s,\l_s}: \l_s~\mathrm{ranges ~over~}[L)\}.$$
\end{theorem}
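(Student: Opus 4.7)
The plan is to prove the biconditional in two directions, with necessity requiring the bulk of the work since it must show that every feasible $\mc T_0$ must take the prescribed form, while sufficiency reduces to a direct verification that any set of the prescribed form yields a valid partition under the shift action.

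For necessity, assume $(S, \mc T_0)$ is feasible, fix $m \in [M)$, and work with $\mc T_{0,m} = \mc T_0 \cap [mZ, (m+1)Z)$. The first step, formalizing the paragraph preceding the theorem, is to establish that $\mc T_{0,m}$ is closed under $\pi^{LS}$. For any $j \in \mc T_{0,m}$ and any $l \in [1,L)$, Lemma~\ref{lemma:number} together with \eqref{optimize:Tl} places $\pi^{(L-l)S}(j) \in \mc T_{L-l,m}$, so $\pi^{(L-l)S}(j) \notin \mc T_{0,m}$. If $\pi^{LS}(j)$ were to lie in $\mc T_{l,m}$ for some $l \in [1, L)$, then writing $\pi^{LS}(j) = \pi^{lS}(\pi^{(L-l)S}(j))$ and invoking Lemma~\ref{lemma:number} would force $\pi^{(L-l)S}(j) \in \mc T_{0,m}$, a contradiction; hence $\pi^{LS}(j) \in \mc T_{0,m}$. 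Iterating shows that the whole $LS$-class through $j$ lies in $\mc T_{0,m}$, so $\mc T_{0,m}$ is a union of complete $LS$-classes. The second step is a counting argument: since $\pi^{S}$ preserves each $S$-class $\mc C_{m,s}$, the sets $\{\mc T_{l,m} \cap \mc C_{m,s}\}_{l \in [L)}$ are obtained from one another by the shift and therefore have equal cardinality; they partition $\mc C_{m,s}$ into $L$ pieces of size $|\mc C_{m,s}|/L = Z/(LS)$, which is exactly the size of a single $LS$-class. Combined with the closure from the first step and the fact that $\mc C_{m,s}$ is itself the disjoint union of the $L$ classes $\mc C_{m,s,0}, \ldots, \mc C_{m,s,L-1}$, this forces $\mc T_{0,m} \cap \mc C_{m,s}$ to equal exactly one $\mc C_{m,s,\ell_{m,s}}$, and taking the union over $s \in [S)$ and $m \in [M)$ yields the stated form.

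For sufficiency, suppose $\mc T_0 = \bigcup_{m \in [M),\, s \in [S)} \mc C_{m,s,\ell_{m,s}}$ for some choice $\ell_{m,s} \in [L)$, and define $\mc T_l = \pi^{lS}(\mc T_0)$, which automatically satisfies \eqref{optimize:Tl}. Using $\pi^{lS}(\mc C_{m,s,\ell}) = \mc C_{m,s,(\ell+l) \bmod L}$, one obtains $\mc T_l = \bigcup_{m,s} \mc C_{m,s,(\ell_{m,s}+l) \bmod L}$. As $l$ ranges over $[L)$, the index $(\ell_{m,s}+l) \bmod L$ traverses $[L)$ for each fixed $(m,s)$, so the $\mc T_l$ collectively cover every $LS$-class $\mc C_{m,s,\ell}$ exactly once; since the $LS$-classes partition the $S$-classes and the $S$-classes partition $[mZ,(m+1)Z)$, one has $\bigcup_l \mc T_l = [MZ)$ with the union disjoint, i.e., Constraint~1 is verified.

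The main obstacle I foresee lies in the necessity direction, where three layers of partitioning interact simultaneously: the block decomposition $[MZ) = \bigsqcup_m [mZ,(m+1)Z)$, the decomposition of each block into $S$-classes, and the refinement of each $S$-class into $LS$-classes. The delicate point is justifying that $\mc T_{0,m} \cap \mc C_{m,s}$ is exactly one $LS$-class rather than a larger union or the empty set; this needs the closure under $\pi^{LS}$ (which makes the intersection a union of $LS$-classes) to be combined with the equal-cardinality identity coming from shift-invariance of $\mc C_{m,s}$ (which pins its size to that of a single class). The implicit divisibility requirements $S \mid Z$ and $S \mid (Z/L)$ will fall out of this cardinality bookkeeping rather than needing a separate argument.
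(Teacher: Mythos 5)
Your proof is correct and follows essentially the same route as the paper: the closure of $\mc T_{0,m}$ under $\pi^{LS}$ via Lemma 2 and the disjointness of the layers is exactly the argument the paper develops in the paragraph preceding the theorem, and your equal-cardinality bookkeeping is just a cleaner packaging of the paper's Table I orbit distribution (which also yields $S\mid (Z/L)$ there). You additionally spell out the sufficiency direction, which the paper dismisses as ``easily verified.''
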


\begin{proof}
Assume $(S,\mc T_0)$ is a feasible solution, $\mc T_0=\cup_{m\in [M)} \mc T_{0,m}= \cup_{m\in [M),s \in [S)} (\mc T_{0,m} \cap \mc C_{m,s})$.
As investigated above, $\mc T_{0,m} \cap \mc C_{m,s}$ is an $LS$-class, i.e., $\mc T_{0,m} \cap \mc C_{m,s} \in \{\mc C_{m,s,\l_s}: \l_s \in [L)\}$.
Therefore, $\mc T_0 \in \{\cup_{m\in [M),s \in [S)} \mc C_{m,s,\l_s}: \l_s \in [L)\}.$
That is, the necessity is clear.
Whereas, the sufficiency can be easily verified.
\end{proof}

\begin{example}
Let $\mathbf{H}$ be given by (\ref{eqn: PCM}) and $(L, S)=(2, 2)$.
The rows of $\mathbf{H}$ can be partitioned into four $S$-classes (2-classes) and eight $LS$-classes (4-classes) as shown in \eqref{eqn: Sclasses} and \eqref{eqn: C11-C41}, respectively.
\begin{align}
\setstretch{0.8}
\setlength{\arraycolsep}{1pt}
   &\mc C_{0,0} = \{0,2\},
  \mc C_{0,1} = {\{1,3\}} ,
  \mc C_{1,0} = {\{4,6\}} ,
  \mc C_{1,1} = {\{5,7\}} , \label{eqn: Sclasses}\\
&\begin{array}{l}
  \mc C_{0,0,0} = \{0\},
  \mc C_{0,0,1} = {\{2\}} ,
  \mc C_{0,1,0} = {\{1\}} ,
  \mc C_{0,1,1} = {\{3\}} , \\
  \mc C_{1,0,0} = {\{4\}} ,
  \mc C_{1,0,1} = {\{6\}} ,
  \mc C_{1,1,0} = {\{5\}} ,
  \mc C_{1,1,1} = {\{7\}} .
\end{array}\label{eqn: C11-C41}
\end{align}
\end{example}
According to Theorem \ref{theorem: feasible solution}, by taking an $LS$-class from each $S$-class, $(2, \mc C_{0,0,0} \cup \mc C_{0,1,0} \cup\mc C_{1,0,0} \cup\mc C_{1,1,1} )$, i.e., $(2,\{0,1,4,7\} )$ is one of the feasible solutions to Problem 3.

\subsection{Combinatorial Characterization}

In this subsection,  we  restrict  $\omega (\mb H_{\mc T_0})=1$ in the objective function
of Problem 3 to get Problem 4. Then, with the construction of all the feasible solutions in Theorem \ref{theorem: feasible solution}, we are able
to give the combinatorial characterization of Problem 4 from the perspective of of graph theory,
which shows the difficulty to solve Problem 3 by a polynomial time method.

\textbf{Problem 4:}   $ \text{Find} ~ (S, \mc T_0)$
 \begin{IEEEeqnarray*}{l}
  \,s.t. \quad \text{Constraint 1 is satisfied,}\notag\\
 \quad \quad \,\,\,\mc T_l =\pi^{lS}(\mc T_0), \forall l \in [L), \text{and}\\
 \quad \quad \,\,\,\omega (\mb H_{\mc T_0})=1.
\end{IEEEeqnarray*}

If Problem 3 is solved, then Problem 4 can be also solved,  but not vice versa.
Indeed,  Problem 4 can be straightforwardly connected to the classical  graph theory problem of finding $k$-cliques in $k$-partite graphs \cite{T2002Finding, 2013Mirghorbani}.

 A graph is denoted as $\mc{G}=(\mc{V}, \mc{E})$ with vertex set $\mc{V}$ and edge set $\mc{E}$, where
a partite $\mc P$ is a subset of $\mc{V}$ such that no vertices in $\mc P$ are connected by an edge.
Graph $\mc{G}$ is called $k$-partite graph if $\mc V$ can be partitioned into $k$ disjoint partite.
 A clique $\mc{D}$ is a subset of $\mc{V}$ such that any two distinct vertexes in $\mc{D}$ are connected by an edge in $\mc{E}$,
 which is called a $k$-clique if  $|\mc D|=k$.

We now construct a graph $\mc{G}=(\mc{V}, \mc{E})$ based on  the $LS$-classes as follows:
\begin{itemize}
\item Let $\mc{V}=\{\mc C_{m,s,l}:m \in [M), s \in [S), l \in [L)\}$.
\item Let $\mc E=\{ (\mc C_{m,s,l}, \mc C_{m',s',l'}) : m,m' \in [M), s,s' \in [S), l,l' \in [L) \}$ such that
    \begin{itemize}
    \item [\textbf{C1}] The $LS$-classes $\mc C_{m,s,l}$ and $\mc C_{m',s',l'}$ are from different $S$-classes, i.e., $m \neq m'$ or $s \neq s'$; and
    \item [\textbf{C2}] $\omega(\mb H_{\mc C_{m,s,l} \cup \mc C_{m',s',l'}})=1$.
    \end{itemize}
\end{itemize}

According to C1, for each $m \in [M)$ and $s \in [S)$, $\{\mc C_{m,s,0}, \mc C_{m,s,1},\ldots,\mc C_{m,s,L-1}\}$ forms a partite in $\mc G$,
and thus $\mc G$ is an $MS$-partite graph. Further, the following theorem illustrates that a feasible solution to Problem 4 corresponds to an $MS$-clique in $\mc G$.

\begin{theorem}
For a given $S$, let $\mc{D} = \{\mc C_{m,s,\l_s}:m \in [M), s \in [S)\}$. Then, $\mc{D} \subseteq \mc{G}$ is an $MS$-clique in $\mc{G}$ iff $(S, \mc T_0 = \cup_{\mc{C}\in \mc{D}} \mc{C})$ is a solution to Problem 4.
\end{theorem}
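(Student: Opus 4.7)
The plan is to leverage Theorem \ref{theorem: feasible solution} so that the conditions ``Constraint 1'' and $\mc T_l = \pi^{lS}(\mc T_0)$ in Problem 4 are automatically handled, reducing the equivalence to matching $\omega(\mb H_{\mc T_0}) = 1$ with the two edge conditions C1 and C2 that define $\mc E$. Since $\mc D = \{\mc C_{m,s,\l_s}: m \in [M), s \in [S)\}$ already has the canonical form from Theorem \ref{theorem: feasible solution}, the associated $\mc T_0 = \bigcup_{\mc C \in \mc D} \mc C$ satisfies Constraint 1 and the block cyclic shift property automatically. The whole game is therefore to show $\omega(\mb H_{\mc T_0}) = 1$ if and only if every pair of distinct vertices in $\mc D$ is joined by an edge in $\mc G$, and to confirm $|\mc D| = MS$ so that a clique on $\mc D$ is indeed an $MS$-clique (which is immediate since $\mc D$ picks exactly one $LS$-class per $(m,s)$ partite).

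For necessity (feasible solution $\Rightarrow$ clique), I would pick any two distinct $\mc C_{m,s,\l_s}, \mc C_{m',s',\l_{s'}} \in \mc D$. By the defining form, distinct members of $\mc D$ must have $(m,s) \neq (m',s')$, so C1 is immediate. For C2, note $\mc C \cup \mc C' \subseteq \mc T_0$, so by subset monotonicity of the maximum column weight $\omega(\mb H_{\mc C \cup \mc C'}) \leq \omega(\mb H_{\mc T_0}) = 1$; on the other hand, the assumption that $\mb H$ contains no zero rows forces $\mb H_{\mc C}$ itself to have at least one $1$, so equality holds and C2 is satisfied.

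For sufficiency (clique $\Rightarrow$ feasible solution), I would argue $\omega(\mb H_{\mc T_0}) = 1$ by contradiction. Suppose some column $k$ of $\mb H_{\mc T_0}$ has weight at least two, i.e., rows $i \neq j \in \mc T_0$ both have a $1$ in column $k$. Each row belongs to some $LS$-class in $\mc D$, say $i \in \mc C$ and $j \in \mc C'$. If $\mc C \neq \mc C'$ then $\omega(\mb H_{\mc C \cup \mc C'}) \geq 2$ contradicts C2 directly. If $\mc C = \mc C'$, picking any other vertex $\mc C'' \in \mc D$ (which exists whenever $MS \geq 2$; the case $MS = 1$ is trivial because the one-vertex set is vacuously a clique and $\mc T_0$ coincides with that single $LS$-class, whose column weight is $1$ by the non-zero-row assumption and the structure of $\mb H$) gives $\omega(\mb H_{\mc C \cup \mc C''}) \geq 2$, again contradicting C2. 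Hence $\omega(\mb H_{\mc T_0}) \leq 1$, and since $\mb H$ has no zero rows this inequality is in fact equality.

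I do not anticipate a serious obstacle: the structural content lives entirely in Theorem \ref{theorem: feasible solution} and in the definitions of the graph $\mc G$, so the argument is essentially a bookkeeping exercise that identifies pairwise edge conditions on $LS$-classes with the global column-weight condition. The only point requiring a sentence of care is ruling out two offending $1$s originating from the same $LS$-class in the sufficiency direction, which is handled cleanly by pairing that class with a third clique vertex.
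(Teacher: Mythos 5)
Your proof is correct and follows essentially the same route as the paper's: both directions delegate feasibility (Constraint 1 and the shift condition) to Theorem \ref{theorem: feasible solution} and then translate $\omega(\mb H_{\mc T_0})=1$ into the pairwise condition C2 using the no-zero-rows fact (the paper's C3), with subset monotonicity of $\omega(\cdot)$ doing the work in each direction. Your explicit handling of the case where both offending 1s lie in the \emph{same} $LS$-class is the one place you are more careful than the paper, whose parenthetical contradiction argument silently assumes the two 1s come from distinct classes; the only questionable point is your aside that the degenerate case $MS=1$ is automatically fine --- a single $LS$-class can still produce a column of weight $\ge 2$ when some circulant block has row weight greater than one with two 1s whose positions differ by a multiple of $LS$ --- but this corner case is equally unaddressed by the paper's own proof and is immaterial in practice.
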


\begin{proof} The  proof is based a fact
    \begin{itemize}
    \item [\textbf{C3}] $\omega(\mb H_{\mc{C}})\ge 1$ for any $\mc{C}\in \mc{D}$
    \end{itemize}
since there are no zero rows in $\mb H$.

\emph{Necessity:}
If $\mc{D} $ is an $MS$-clique in $\mc{G}$,
 according to C2,  then  $\omega(\mb H_{\mc{C} \cup \mc{C}'})=1$ for any two distinct $LS$-classes $\mc{C},\mc{C}' \in \mc{D}$.
 As a result, we have $\omega(\mb H_{\cup_{\mc{C} \in \mc{D}}\, \mc{C}})=1$.
(Otherwise, we have $\omega(\mb H_{\cup_{\mc{C} \in \mc{D}} \mc{C}})\ge 2$ by C3. Consequently, there exists two $LS$-classes  $\mc{C},\mc{C}'$ with $\omega(\mb H_{\mc{C} \cup \mc{C}'})\ge 2$, leading to a contradiction.)
The necessity then follows from Theorem \ref{theorem: feasible solution}.

\emph{Sufficiency:}
Let $(S, \mc T_0 = \cup_{\mc{C}\in \mc{D}} \mc{C})$ be a solution to Problem 4.
We then have $\omega(\mb H_{\cup_{ \mc{C}\in \mc{D}}\,\mc{C}}) =1$, which and C3 indicate that any two $LS$-classes $\mc{C},\mc{C}'\in \mc{D}$ satisfy $\omega(\mb H_{\mc{C} \cup \mc{C}'})=1$.
Consequently, we have the edge $(\mc{C}, \mc{C}') \in \mc{E}$, implying that
the set $\mc{D} = \{\mc C_{m,s,\l_s}:m \in [M), s \in [S)\}$ is a clique.
\end{proof}

\begin{example} Continued with Example 2.
The corresponding $\mc{G}=(\mc{V}, \mc{E})$ is shown by Fig. \ref{fig: 2-partite}, where vertexes are denoted by circles.
As can be seen from Fig. \ref{fig: 2-partite}, $\mc{G}$ consists of 4 partites:
 $\{ \mc C_{0,0,0}, \mc C_{0,0,1}\},\{ \mc C_{0,1,0}, \mc C_{0,1,1}\},\{ \mc C_{1,0,0}, \mc C_{1,0,1}\},\{ \mc C_{1,1,0}, \mc C_{1,1,1}\}$,
 and the four dashed circles $\{ \mc C_{0,0,0}, \mc C_{0,1,0}, \mc C_{1,0,0}, \mc C_{1,1,1}\}$ form a 4-clique in $\mc{G}$.
It can also be easily verified that $\omega(\mb H_{\mc C_{0,0,0} \cup \mc C_{0,1,0} \cup\mc C_{1,0,0} \cup\mc C_{1,1,1}} )=1$, implying $(S, \mc T_0) = (2, \mc C_{0,0,0} \cup \mc C_{0,1,0} \cup\mc C_{1,0,0} \cup\mc C_{1,1,1} )$ is a solution to Problem 4.
\end{example}

\begin{figure}[t]
\centering
\includegraphics[scale=0.45]{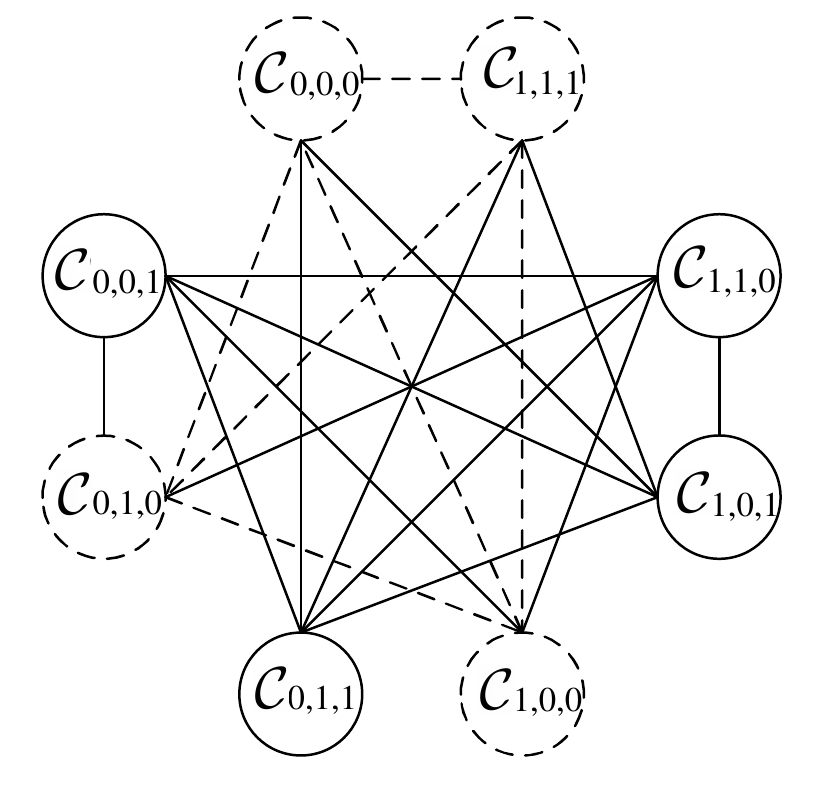}
\caption{The 4-partite graph $\mc{G}$ corresponds to \eqref{eqn: C11-C41}. }
\label{fig: 2-partite}
\vspace{-0.3cm}
\end{figure}

However, up to now, there is no polynomial time method available  for finding $k$-cliques in $k$-partite graphs\cite{T2002Finding, 2013Mirghorbani},  so does for  Problem 4. Then, there is also no polynomial time method for solving Problem 3.

\section{PCM Partition Algorithms}

For practical application, in this section we  propose two algorithms to solve Problem 3 by using enumerative and greedy search, respectively.

\emph{Enumerative algorithm:} As shown in Algorithm 2,
 firstly we fix $S$ and enumerate all the feasible solutions with $S$ according to Theorem \ref{theorem: feasible solution};
then, we go through all the possible $S$ to obtain all the feasible solutions to Problem 3;
finally, among them, the one with the minimal $\omega(\mb H_{\mc T_0})$ is the optimal solution.

For a given $S$, we have $|\mc{F}_S| = L^{MS}$, i.e., the $\omega(\cdot)$ function is computed for $L^{MS}$ times.
Computing $\omega(\cdot)$ has complexity $O(MNZ^2/L)$ since $\mb H_{\mc{T}_0}$  has $NZ$ columns, each of which has $MZ/L$ elements.
Thus, Algorithm \ref{algo: enumerative} has complexity $O(MNZ^2 L^{MS-1})$ for a given $S$.

The above enumerative algorithm may not work for a large scale case.
Thus, we present a greedy method (similar to that in  \cite{Cui09}) as an alternative.

\emph{Greedy algorithm:} As shown in Algorithm \ref{algo: greedy},
for a given $S$, the idea is to handle $\mc{C}_{0,0},  \ldots, \mc{C}_{M-1,S-1}$ one-by-one, corresponding to Line 4.
Then, for each $\mc{C}_{m,s}$, select a locally best $LS$-class $\mc{C}_{m,s, {\ell^*}}$ to add into $\mc{T}_0$, corresponding to Line 11.
The $\omega(\cdot)$ function is computed for $O(MSL)$ times.
As a result, Algorithm \ref{algo: greedy} has complexity $O(M^2 S N Z^2)$ for a given $S$.

\begin{table}[t!]
\begin{algorithm}[H]
	\normalsize
	\caption{Enumerative method for solving Problem 3}\label{algo: enumerative}		
    \begin{algorithmic}[1]
		\REQUIRE $\mb H$, $L$.
		\ENSURE $({S}^*, {\mc T^*_0})$.
%
       \STATE{\textbf{Initialization:}} {$\mc F=\emptyset$}.
       \FOR {each factor $S$ of $Z/L$}
\STATE {$\mc F_S= \{(S,\cup_{m\in [M),s \in [S)} \mc C_{m,s,\l_s}): \l_s ~\mathrm{ranges ~over~}[L)\}.$}
\STATE {$\mc F=\mc F \cup \mc F_S.$}
\ENDFOR
	\STATE {Return $({S}^*, {\mc T^*_0}) = \mathop{\arg\min}_{({S}, {\mc T_0}) \in \mc F} \omega(\mb H_{\mc T_0})$.}	
	\end{algorithmic}
\end{algorithm}
\vspace{-0.6cm}
\end{table}
	\begin{table}[t!]
\begin{algorithm}[H]
	\normalsize
	\caption{Greedy method for solving Problem 3}\label{algo: greedy}
	\begin{algorithmic}[1]
		\REQUIRE $\mb H$, $L$.
		\ENSURE $({S}^*, {\mc T^*_0})$.
    	\STATE{\textbf{Initialization:}} ${\omega^*} = \infty$.
    	\FOR {each factor $S$ of $Z/L$}
            \STATE{\textbf{Initialization:}} $\mc T_0 = \emptyset$.
        	\FOR {each $m \in [M)$ and $s \in [S)$}
                \STATE{\textbf{Initialization:}} $\omega^*_{m,s} = \infty$, ${l^*} = 0$.
                \FOR {each $l \in [L)$}
                    \IF {$ \omega(\mb H_{\mc T_0 \cup \mc{C}_{m,s, l}})< \omega^*_{m,s}$}

                    \STATE Let $(\omega^*_{m,s}, {l^*}) = (\omega(\mb H_{\mc T_0 \cup \mc{C}_{m,s, l}}), l)$.
                    \ENDIF
            	\ENDFOR
                \STATE  $\mc T_0 = \mc T_0 \cup \mc{C}_{m,s, {l^*}}$.
        	\ENDFOR
            \STATE  If $\omega(\mc T_0) < {\omega^*}$, let $({\omega^*}, {S^*},{\mc T^*_0}) = (\omega (\mb H_{\mc T_0}), S, \mc T_0)$.
    	\ENDFOR

	\end{algorithmic}
\end{algorithm}\vspace{-0.8cm}
\end{table}

The $\omega(\cdot)$ function is computed frequently in  Algorithms \ref{algo: enumerative} and \ref{algo: greedy}.
Reducing the complexity of computing $\omega(\cdot)$ can benefit the efficiency of both algorithms.
Specifically, since $\omega(\mb H_{\mc T_0})$ is of interest, it suffices to compute the weight of certain columns of $\mb H_{\mc T_0}$ instead of all its columns, as shown in the following theorem.

\begin{theorem}\label{theorem: column weight}
Assume that $(S,\mc T_0)$ is a solution to Problem 3.
Then,  \vspace{-0.1cm}
\begin{equation*}
  \omega(\mb H_{\mc T_0})= \max_{ j \in [NZ): j\bmod Z < LS} \omega_j(\mb H_{\mc T_0}).
  \vspace{-0.1cm}
\end{equation*}
\end{theorem}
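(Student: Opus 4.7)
The plan is to exploit two structural facts derived earlier in the paper: (i) the index set $\mc T_0$ of any feasible partition is invariant under the permutation $\pi^{LS}$, and (ii) applying $\pi^{LS}$ to the row indices corresponds to applying the block shift $\phi^{LS}$ to the entire extracted matrix. Together these will force the column weights of $\mb H_{\mc T_0}$ to be periodic with period $LS$ inside each block of $Z$ columns, which immediately yields the claim.

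First I would verify that $\pi^{LS}(\mc T_0)=\mc T_0$ as sets. By Theorem 2, $\mc T_0$ is a union of $LS$-classes of the form $\mc C_{m,s,\ell_s}=mZ+s+\{\ell_s S\}_{LS}$, one from each $S$-class. Because $S$ must divide $Z/L$ (established in Section IV-B), we have $LS\mid Z$, so $\pi^{LS}$ acts as a cyclic permutation within each $LS$-class and fixes it setwise. Consequently the union $\mc T_0$ is also fixed, i.e., $\pi^{LS}(\mc T_0)=\mc T_0$.

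Next I would lift this to the row matrix. Applying Lemma~\ref{theorem:pi} entry-wise gives $\mb H[\pi^{LS}(i)]=\phi^{LS}(\mb H[i])$ for every $i\in\mc T_0$; stacking these rows yields $\mb H_{\pi^{LS}(\mc T_0)}=\phi^{LS}(\mb H_{\mc T_0})$. Combined with the previous step, $\phi^{LS}(\mb H_{\mc T_0})$ is obtained from $\mb H_{\mc T_0}$ merely by a row permutation, and so the two matrices share exactly the same multiset of column weights.

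Finally I would translate the block shift into a column relabeling. From the definition of $\lambda^s$, the column at position $nZ+k$ of $\phi^{LS}(\mb H_{\mc T_0})$ is a copy of the column at position $nZ+((k-LS)\bmod Z)$ of $\mb H_{\mc T_0}$. Combined with the row-permutation equality above,
$$\omega_{nZ+k}(\mb H_{\mc T_0})=\omega_{nZ+((k-LS)\bmod Z)}(\mb H_{\mc T_0}),$$
so every column weight equals one whose index modulo $Z$ lies in $[LS)$ after finitely many applications. Taking the maximum over $j\in[NZ)$ on both sides gives the asserted identity. The only delicate point is keeping the direction of $\lambda^s$ consistent with the paper's convention when identifying columns of $\phi^{LS}(\mb H_{\mc T_0})$ with those of $\mb H_{\mc T_0}$; once that bookkeeping is settled, the argument is completely mechanical, and I do not expect any substantive obstacle.
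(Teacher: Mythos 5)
Your proposal is correct and follows essentially the same route as the paper: both arguments use the fact (from Theorem~\ref{theorem: feasible solution}) that $\mc T_0$ is a union of $LS$-classes, so that applying $\phi^{LS}$ maps rows of $\mb H_{\mc T_0}$ back into $\mb H_{\mc T_0}$, which forces $\omega_j(\mb H_{\mc T_0})=\omega_{j'}(\mb H_{\mc T_0})$ for all $j'\in\{j\}_{LS}$ and reduces the maximum to column indices with $j\bmod Z<LS$. The only difference is presentational — you phrase the invariance at the whole-matrix level ($\phi^{LS}(\mb H_{\mc T_0})$ is a row permutation of $\mb H_{\mc T_0}$) whereas the paper tracks individual 1-entries row by row — and your bookkeeping of the column relabeling under $\lambda^{LS}$ is consistent with the paper's convention.
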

\begin{proof}
Take an arbitrary row vector $\mb x$ from $\mb H_{\mc T_0}$, whose $j$-th entry is 1.
According to  Theorem \ref{theorem: feasible solution} that $\mc T_0$ consists of the $LS$-classes,
we then have $\phi^{LS}(\mb x)$ is also a row vector of $\mb H_{\mc T_0}$ with $\pi^{LS}(j)$-th entry being 1.
Analogously, there exist row vectors in $\mb H_{\mc T_0}$, whose $\pi^{2LS}(j)$-th, $\pi^{3LS}(j)$-th, $\ldots$, $\pi^{Z-LS}(j)$-th entries are 1, respectively.
That is, a 1-entry in $j$-th column of $\mb H_{\mc T_0}$ results in a 1-entry in $j'$-th column of $\mb H_{\mc T_0}$ respectively, for each $j' \in \{j\}_{LS}$.
Therefore,  $\omega_j(\mb H_{\mc T_0})=\omega_{j'}(\mb H_{\mc T_0})$ for each $j \in [NZ)$ and $j' \in\{j\}_{LS}$.
Noting that for any $j \in [NZ)$, there exists a $j' \in \{j\}_{LS}$ such that $j' \bmod Z < LS$, we have
\begin{eqnarray*}
\omega(\mb H_{\mc T_0})
&=& \max_{j \in [NZ)} \omega_j(\mb H_{\mc T_0})\\
&=& \max_{j \in [NZ): j \bmod Z < LS} \omega_j(\mb H_{\mc T_0}).
\end{eqnarray*}
The proof is completed.
\end{proof}

By Theorem \ref{theorem: column weight},  the complexity of computing $\omega(\cdot)$ can be reduced by a factor of $\frac {Z}{LS}$.
The reduction is significant since generally $L$ and $S$ are small while $Z$ is large for practical scenarios.

The following  lower bound on $\omega (\mb H_{\mc T_0})$ is obvious,
which would be greatly helpful for evaluating the quality of a feasible solution.

\begin{theorem}\label{theorem: lower bound}
For any  solution $(S,\mc T_0)$ to Problem 3,
\[
     \omega_{LB} {\triangleq } \lceil \omega(\mb{H})/L \rceil \leq \omega(\mb H_{\mc T_0}).
\]
\end{theorem}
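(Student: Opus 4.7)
The plan is to combine a pigeonhole argument on one specific high-weight column with the fact that the block cyclic shift property forces all $L$ layers to share the same maximum column weight.

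First, I would pick a column index $j^* \in [NZ)$ that attains the maximum column weight in $\mb H$, i.e., $\omega_{j^*}(\mb H) = \omega(\mb H)$. Since $\mc T_0, \mc T_1, \ldots, \mc T_{L-1}$ form a partition of $[MZ)$ by Constraint 1, the $\omega(\mb H)$ non-zero entries of column $j^*$ are distributed among the $L$ layers, and by pigeonhole at least one layer, say $\mc T_{l^*}$, contains at least $\lceil \omega(\mb H)/L\rceil$ of those non-zero entries. Consequently,
\[
\omega(\mb H_{\mc T_{l^*}}) \;\geq\; \omega_{j^*}(\mb H_{\mc T_{l^*}}) \;\geq\; \lceil \omega(\mb H)/L\rceil \;=\; \omega_{LB}.
\]

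Next, I would use the block cyclic shift constraint \eqref{optimize:condition3}, which states that $\mb H_{\mc T_l} = \phi^{lS}(\mb H_{\mc T_0})$ for every $l \in [L)$. The operator $\phi^{lS}$ cyclically shifts the columns inside each of the $N$ blocks of width $Z$, so it merely permutes the multiset of column weights of $\mb H_{\mc T_0}$. In particular, the maximum column weight is preserved, giving $\omega(\mb H_{\mc T_l}) = \omega(\mb H_{\mc T_0})$ for all $l \in [L)$. Combining this equality with the pigeonhole bound applied at $l = l^*$ yields $\omega(\mb H_{\mc T_0}) \geq \omega_{LB}$, which is exactly the claim.

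There is essentially no hard step here; the argument is a straightforward two-line combination of pigeonhole and the built-in symmetry of the partition. The only point that requires a brief justification is that the block cyclic shift $\phi^{lS}$ leaves the column-weight multiset invariant, which follows immediately from the definition of $\phi^{s}$ as a column permutation within each $Z$-column block.
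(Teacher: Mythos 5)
Your proof is correct. The paper itself states this bound without proof, calling it obvious, so there is no argument to compare against line by line; your pigeonhole step on a maximum-weight column combined with the observation that $\phi^{lS}$ permutes column weights (hence all layers share the same maximum column weight) is exactly the standard justification, and it is the same reasoning the paper implicitly invokes later when it writes $\omega\bigl(\mb H^{(S,d)}_{\mc T_0}\bigr) \geq \omega\bigl(\mb H^{(S,d)}\bigr)/L$ in the proof of the layer-distance upper bound. No gaps.
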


Similar to Theorem \ref{theorem: lower bound}, we have the following upper bound on  layer distance.
\begin{theorem}\label{theorem: column weight with layer distance}
Given $\mb H$ and $L$, let $(S, \mc T_0)$ be an arbitrary feasible solution to Problem 3.
Then,
\[
     d_{UB} \triangleq  \lfloor L/\omega(\mb{H}) \rfloor \geq d(S,\mc T_0).
\]
\end{theorem}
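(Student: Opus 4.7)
The plan is to pick a column index $j^*$ with $\omega_{j^*}(\mathbf{H}) = \omega(\mathbf{H})$, set $w_l := \omega_{j^*}(\mathbf{H}_{\mathcal{T}_l})$ for $l \in [L)$, extract a cyclic $d$-window condition on the sequence $(w_0, \ldots, w_{L-1})$ from the layer distance hypothesis, and then deduce $\omega(\mathbf{H}) \leq \lfloor L/d \rfloor$ by a standard cyclic gap count; writing $d := d(S, \mathcal{T}_0)$, rearranging and using that $d$ is a nonnegative integer yields $d \leq \lfloor L/\omega(\mathbf{H})\rfloor$. The case $d = 0$ is immediate since $\lfloor L/\omega(\mathbf{H})\rfloor \geq 0$, so I will assume $d \geq 1$.

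First, because $\mathcal{T}_0, \ldots, \mathcal{T}_{L-1}$ partition the rows of $\mathbf{H}$, one has $\sum_{l=0}^{L-1} w_l = \omega(\mathbf{H})$. Next, by Remark~1 --- which upgrades the layer distance condition to a cyclic statement via the block cyclic shift property --- for every $i \in [L)$ the stack $\mathbf{H}_{\mathcal{T}_i}, \mathbf{H}_{\mathcal{T}_{(i+1) \bmod L}}, \ldots, \mathbf{H}_{\mathcal{T}_{(i+d-1) \bmod L}}$ still has maximum column weight at most $1$; restricting to column $j^*$ gives $\sum_{k=0}^{d-1} w_{(i+k) \bmod L} \leq 1$ for every $i$. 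Hence every cyclic window of length $d$ in $(w_0, \ldots, w_{L-1})$ sums to at most one, which forces $w_l \in \{0, 1\}$ and places the $1$'s at pairwise cyclic distance $\geq d$ in $\mathbb{Z}/L$. Summing the $k$ cyclic gaps between consecutive $1$'s around the cycle gives $L \geq kd$, so $\omega(\mathbf{H}) = \sum_l w_l = k \leq \lfloor L/d \rfloor$, as required.

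The main obstacle is the cyclic reading of the layer distance condition. Definition~\ref{def: layer distance} itself only provides the single window $\sum_{l=0}^{d-1} w_l \leq 1$, which controls nothing about $w_d, \ldots, w_{L-1}$ directly and would only produce a weaker $\lceil L/d \rceil$ bound. The lift to the cyclic version rests on the fact that the column-weight pattern of $\mathbf{H}_{\mathcal{T}_0}$ is constant on each $LS$-class (Theorem~\ref{theorem: column weight}), and hence $L$-periodic along every $S$-class; this structural fact is precisely what justifies Remark~1 and what sharpens the naive gap-count bound from $\lceil L/d \rceil$ to $\lfloor L/d \rfloor$, yielding the claimed inequality.
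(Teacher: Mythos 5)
Your proof is correct, and it reaches the same core inequality $d\cdot\omega(\mathbf{H})\le L$ as the paper, but by a more elementary and self-contained route. The paper's proof is a two-line corollary of its earlier machinery: it forms the auxiliary matrix $\mathbf{H}^{(S,d)}=\sum_{i\in[d)}\phi^{iS}(\mathbf{H})$, invokes Theorem~\ref{theorem: conflict} to get $\omega\bigl(\mathbf{H}^{(S,d)}_{\mathcal{T}_0}\bigr)=1$, applies the lower bound of Theorem~\ref{theorem: lower bound} to $\mathbf{H}^{(S,d)}$, and uses $\omega\bigl(\mathbf{H}^{(S,d)}\bigr)=d\,\omega(\mathbf{H})$ (which silently relies on column weights being constant within each block of $Z$ columns). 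You instead fix a single maximum-weight column, track how its weight splits across the $L$ layers, and run a cyclic window/gap count; your double count of the $L$ cyclic $d$-windows is exactly the averaging hidden in the paper's $\omega(\mathbf{H}^{(S,d)}_{\mathcal{T}_0})\ge\omega(\mathbf{H}^{(S,d)})/L$, so the two arguments are arithmetically equivalent. What your version buys is transparency about where the cyclic structure enters (via Remark~1's wrap-around reading of Definition~\ref{def: layer distance}) and independence from the $\mathbf{H}^{(S,k)}$ formalism; what the paper's version buys is brevity by reusing Theorems~\ref{theorem: conflict} and~\ref{theorem: lower bound}. One small imprecision: the wrap-around validity of the cyclic windows is best justified not by Theorem~\ref{theorem: column weight} but by the identity $\mathbf{H}_{\mathcal{T}_{(i+l)\bmod L}}=\phi^{iS}(\mathbf{H}_{\mathcal{T}_l})$, which holds because each $\mathcal{T}_l$ is a union of $LS$-classes and hence fixed by $\pi^{LS}$, together with the fact that $\phi$ only permutes columns within blocks; this does not affect the correctness of your argument since the paper asserts the cyclic reading in Remark~1.
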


\begin{proof}
Recall that $\mb H^{(S,d(S,\mc T_0))}=\mb H+\phi^S(\mb H)+\phi^{2S}(\mb H)+\cdots+\phi^{(d(S,\mc T_0)-1)S}(\mb H)$.
 According to Theorem \ref{theorem: lower bound},  we have
\begin{align}
1=\omega\left(\mb H^{\left(S, d\left(S,\mc T_0\right)\right)}_{\mc T_0}\right)  &\geq  \omega \left(\mb H^{\left(S, d\left(S,\mc T_0\right)\right)}\right)/L \notag \\
&= d(S,\mc T_0) \omega(\mb H) / L, \label{eqn:d_UB}
\end{align}
since $\omega (\mb H)=\omega (\phi^{iS}(\mb H))$, $1\le i< d(S,\mc T_0)$, which leads to the claimed bound.
\end{proof}

Further, when considering a desired layer distance and finding a proper $L$, we have the following corollary.

\begin{corollary}\label{cor:L_LB}
Given $\mb H$, to have a partition scheme with layer distance at least $k$, then   $L\ge L_{LB}$ where the lower bound $L_{LB}$ is the smallest factor of $Z$ and satisfies $L_{LB} \geq k \cdot \omega(\mb H)$.
\end{corollary}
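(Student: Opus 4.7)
The plan is to read off the bound by chaining together two constraints on $L$ that are already established earlier in the paper: a divisibility constraint coming from the feasibility of the partition scheme, and a quantitative constraint coming from the layer-distance upper bound in Theorem \ref{theorem: column weight with layer distance}. Neither step requires new machinery, so the corollary should follow essentially by inspection; the work is simply to put the two conditions together cleanly.

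First I would recall the divisibility condition. From the derivation of property B1 in Section IV, any feasible solution $(S,\mc T_0)$ to Problem 3 forces each $|\mc T_{l,m}|=Z/L$, so $L$ must be a factor of $Z$. This is a purely structural requirement that any candidate partition scheme, regardless of what layer distance it achieves, must satisfy. Next I would quantify $L$ using Theorem \ref{theorem: column weight with layer distance}, which states $d(S,\mc T_0)\le \lfloor L/\omega(\mb H)\rfloor$. Assuming a partition scheme with layer distance at least $k$ exists, we immediately get $\lfloor L/\omega(\mb H)\rfloor \ge k$, hence $L \ge k\cdot \omega(\mb H)$.

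Combining the two conditions, any admissible $L$ must lie in the set of positive factors of $Z$ that are at least $k\cdot\omega(\mb H)$. The smallest such element is, by definition, $L_{LB}$, so $L\ge L_{LB}$, which is exactly the claim. I would close by remarking that if no factor of $Z$ satisfies $L\ge k\cdot\omega(\mb H)$ — e.g.\ when $k\cdot\omega(\mb H)>Z$ — then no partition scheme with layer distance at least $k$ can exist at all, which is consistent with the statement being vacuous in that regime.

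The main (and only real) obstacle I foresee is making sure the divisibility requirement $L\mid Z$ is cited clearly, since it is stated as a bulleted consequence of B1 rather than as a numbered lemma; once that is in place the argument is a one-line inequality chase from Theorem \ref{theorem: column weight with layer distance}. No further combinatorial work on $\mc T_0$ or $S$ is needed, because the bound only concerns $L$ itself.
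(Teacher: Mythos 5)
Your proof is correct and follows essentially the same route as the paper: the bound $L \geq k\cdot\omega(\mb H)$ is read off from Theorem \ref{theorem: column weight with layer distance} (the paper substitutes $k$ directly into the intermediate inequality \eqref{eqn:d_UB}, while you equivalently use the floored statement), and the conclusion follows by combining this with the divisibility requirement $L\mid Z$. No gaps.
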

\begin{proof}
By replacing $d(S,\mc T_0)$ in \eqref{eqn:d_UB} with $k$, we have
$$L \geq k \cdot \omega(\mb H). $$
Then, the corollary follows from the fact  $L\mid Z$.
\end{proof}

\section{{PCM Design for Partitioning}}

According to the algorithms presented in Section V, we can get a partition scheme with a small maximum column weight or a large layer distance.
However, for some cases, finding such solution may be too time-consuming or even there is no solution achieving $\omega_{LB}$ or the desired layer distance.


In this section, we design such QC-LDPC codes based on QC-PEG algorithms \cite{Li04QCLDPC,He18PEG}.
In particular, we first introduce the framework of QC-PEG algorithms.
Then, we derive a condition for PCM to guarantee that the PCM has a partition scheme attaining $\omega_{LB}$ or the desired layer distance.
Finally, we add this condition into the CN selection strategy of a QC-PEG algorithm to construct the desired QC-LDPC codes.

\subsection{Framework of QC-PEG}
QC-PEG algorithms are widely used to design QC-LDPC codes by avoiding short cycles.
To illustrate the QC-PEG algorithms, we give two notations in the following.
\begin{itemize}
\item Denote $\left(c_{i}, v_{j}\right)_{Z}=\{\left(c_{\pi^t(i)}, v_{\pi^t(j)}\right):
     t \in [Z)\},$ for each $i \in [MZ), j \in [NZ)$ as the cyclic edge set (CES) which $\left(c_{i}, v_{j}\right)$ belongs to.
     For a QC-LDPC PCM, we have $\left(c_{i}, v_{j}\right) \in \mc E \Longleftrightarrow\left(c_{i}, v_{j}\right)_{Z} \subseteq \mc E$.
\item Denote $ \mathbf{d}=(d_0,d_1,\ldots,d_{NZ-1})$  the  VN degree sequence, where $d_j$ is the degree of the $j$-th VN. Clearly,  $d_j = d_{j'}$ for $j, j' \in [NZ)$ with $\lfloor j/Z \rfloor = \lfloor j'/Z \rfloor.$
\end{itemize}

The framework of QC-PEG algorithms is shown in Algorithm 4, where the key point is to design an efficient strategy for selecting CNs, corresponding to
Line 4.
A CN selecting strategy consists of several criteria, which are carried out in order.
For instance, a QC-PEG algorithm was illustrated in \cite{He18PEG}, whose CN selecting strategy is outlined in Strategy 1.

\textbf{\emph{Strategy 1:}}

1) Select the CNs $c_i$ with $(c_i, v_j) \notin \mc E$.

2) Select the survivors $c_i$ such that the length of the shortest cycle passing through $c_i$ and $v_j$ in $(\mc V, \mc E \cup (c_i,v_j)_Z)$ is maximum. (If no such cycles exist, we define the length as infinite.)

3) Select the survivors with the minimal degree.

4) Select a survivor randomly.

\begin{table}[t!]
\begin{algorithm}[H]
	\normalsize
	\caption{Framework of QC-PEG algorithms}\label{algo: PEG}		
    \begin{algorithmic}[1]
		\REQUIRE $M, N, L, Z, \mathbf{d} .$
		\ENSURE $\mc G .$
		\STATE{\textbf{Initialization:}} {$\mc G=(\mc V, \emptyset) .$}
          \FOR {$j = 0, Z, ..., (N-1)Z$}
          \FOR {each $t = 0, 1, ..., d_{j }- 1$}
                \STATE  Select a CN $c_i$ based on a certain strategy.
                \STATE $\mc E =\mc E \cup \left(c_{i}, v_{j}\right)_{Z}$.
            \ENDFOR
            \ENDFOR
            \RETURN $\mc G $.
	\end{algorithmic}
\end{algorithm}
\vspace{-0.6cm}
\end{table}

\subsection{PCM Design for Achieving $\omega_{LB}$}
In this subsection, for given $M, N, Z, L$ and $\mb{d}$, we employ Algorithm \ref{algo: PEG} to construct $\mb H$ achieving $\omega_{LB} = \lceil \omega(\mb H)/L \rceil = \max_{j \in [NZ)}  \lceil d_j / L \rceil$ with a straightforward partition scheme $(S, \mc T_0) = (1,\cup_{m \in [M)} \mc C_{m,0,0})$.
We consider $S = 1$ since it is the simplest and further it is  always sufficient for the construction.

\begin{lemma}\label{lemma: design}
 $(S, \mc T_0) = (1,\cup_{m \in [M)} \mc C_{m,0,0})$ is an optimal solution with $\omega(\mb H_{\mc T_0}) = \omega_{LB}$ iff
\begin{equation}\label{eqn: design}
|\{x \in \mc N(v_j): x \equiv 0 \bmod L \}| \leq \omega_{LB}, \forall j \in [NZ)
\end{equation}
where the equality holds for some $j$s.
\end{lemma}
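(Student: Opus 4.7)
\emph{Proof proposal.} The plan is, for $S = 1$, first to unpack $\mc T_0 = \cup_{m \in [M)} \mc C_{m,0,0}$ into a clean description in terms of residues modulo $L$, next to re-express $\omega(\mb H_{\mc T_0})$ as the maximum of the neighborhood counts appearing in \eqref{eqn: design}, and finally to invoke the lower bound of Theorem \ref{theorem: lower bound} to close both directions of the iff.

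First I would verify feasibility and simplify $\mc T_0$. Since $S = 1$ divides $Z/L$ (the standing requirement $L \mid Z$ is in force) and the index $s$ of Theorem \ref{theorem: feasible solution} ranges only over $\{0\}$ when $S = 1$, the pair $(1, \cup_m \mc C_{m,0,0})$ is a legitimate instance of Theorem \ref{theorem: feasible solution}, obtained by selecting the $0$-th $LS$-class in every row-block, and is therefore feasible. Unpacking $\mc C_{m,0,0} = mZ + \{0\}_L = \{mZ + tL : t \in [Z/L)\}$ and taking the union over $m$ gives
\[
\mc T_0 = \{x \in [MZ) : x \equiv 0 \bmod L\}.
\]

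Next I would translate column weights into neighborhood counts. For each $j \in [NZ)$, the weight of the $j$-th column of $\mb H_{\mc T_0}$ counts the rows $x \in \mc T_0$ with a one in column $j$, i.e., $|\mc T_0 \cap \mc N(v_j)|$, which by the previous display equals $|\{x \in \mc N(v_j) : x \equiv 0 \bmod L\}|$. Taking the maximum over $j$,
\[
\omega(\mb H_{\mc T_0}) = \max_{j \in [NZ)} |\{x \in \mc N(v_j) : x \equiv 0 \bmod L\}|.
\]

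Finally I would combine this identity with Theorem \ref{theorem: lower bound}, which guarantees $\omega(\mb H_{\mc T_0}) \geq \omega_{LB}$ for every feasible solution. Hence $\omega(\mb H_{\mc T_0}) = \omega_{LB}$ iff the above maximum equals $\omega_{LB}$, iff every $j$ satisfies the bound in \eqref{eqn: design} with at least one $j$ attaining equality; and hitting $\omega_{LB}$ automatically certifies optimality in Problem 3. I do not expect any genuine obstacle; the only small subtlety worth flagging is that the ``equality for some $j$'' clause of \eqref{eqn: design} is not an additional hypothesis but is forced by the $\leq$ clause together with Theorem \ref{theorem: lower bound}, so that the two halves of \eqref{eqn: design} jointly amount to a compact restatement of $\omega(\mb H_{\mc T_0}) = \omega_{LB}$.
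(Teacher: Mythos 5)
Your proposal is correct and follows essentially the same route as the paper's proof: unpack $\mc T_0 = \{x \in [MZ): x \equiv 0 \bmod L\}$, identify $\omega_j(\mb H_{\mc T_0})$ with $|\{x \in \mc N(v_j): x \equiv 0 \bmod L\}|$, and take the maximum over $j$. Your extra remarks — that feasibility follows from Theorem \ref{theorem: feasible solution}, and that the ``equality for some $j$'' clause is forced by the $\leq$ clause together with the lower bound of Theorem \ref{theorem: lower bound} — only make explicit steps the paper leaves implicit.
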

\begin{proof}
Noting that $\cup_{m \in [M)} \mc C_{m,0,0}=\{0,L,2L,\ldots,MZ-L\}=\{x \in [MZ): x \equiv 0 \bmod L\}$,
for each $j \in [NZ)$, we have $\omega_j(\mb H_{\mc T_0})= |\mc N(v_j) \cap \mc T_0|=|\mc N(v_j) \cap (\cup_{m \in [M)} \mc C_{m,0,0})|=|\{x \in \mc N(v_j): x \equiv 0 \bmod L \}|$.
Then, the lemma follows from the fact that $\omega(\mb H_{\mc T_0})= \max_{j \in [NZ)}\omega_j(\mb H_{\mc T_0})$.
\end{proof}

Based on  Lemma  \ref{lemma: design}, we modify the selection strategy of the QC-PEG algorithm \cite{He18PEG} to obtain a PCM $\mb H$ with a straightforward partition scheme as follows.

\textbf{\emph{Strategy 2:}}

1) Select the CNs $c_i$ such that  $(c_i, v_j) \notin \mc E$ and
\begin{equation}\label{eqn:strategy2}
|\{x \in \mc N'(v_{j+z}): x \equiv 0 \bmod L\}| \leq \omega_{LB}, \forall z \in [Z),
\end{equation}
where $\mc N'(v_{j+z}) \triangleq \mc N(v_{j+z}) \cup \{\pi^z(i)\}$ denotes the index set of CNs connected to $v_{j+z}$ after the CES $(c_i,v_j)_Z$ is added in $\mc E$.

2) The same as criteria 2 to 4 in Strategy 1.

According to criterion 1, with $j$ going through $0,Z,\dots,(N-1)Z$, the PCM constructed by Strategy 2 satisfies \eqref{eqn: design} and has a partition scheme achieving $\omega_{LB}$ consequently.

Compared with Strategy 1, the criterion 1 in Strategy 2 is more restrictive on CNs.
For example, assume $\omega_{LB}=1$. When we select a $t$-th CN  connecting to $v_j$, there are $MZ-t$ CNs surviving from criterion 1 in Strategy 1, since there are $t$ CNs already connected to $v_j$ and  then are unavailable.
On the other hand, there are $MZ-t(MZ/L)$ candidate CNs after performing the criterion 1 in Strategy 2, since there are $t$ CNs already connected to $v_j$, each of which results in $MZ/L$ unavailable CNs.

Fortunately, the following theorem shows that criterion 1 of Strategy 2 does not miss any $\mb H$ achieving $\omega_{LB}$ under possible row permutation.

\begin{theorem}\label{theorem: design principle}
$\mb H$ has a partition with $\omega(\mb H_{\mc{T}_0})= \omega_{LB}$ iff there exists a QC-LDPC PCM
\begin{eqnarray*}\label{eqn: H prime1}
	\mathbf{H'}=\left[\begin{array}{c}
	 \phi ^{j_0}(\mb H_0) \\
     \phi ^{j_1}(\mb H_1) \\
	\vdots\\
	 \phi ^{j_{M-1}}(\mb H_{M-1}) \\
	\end{array}\right], j_0, j_1,..., j_{M-1} \in [L),
	\end{eqnarray*}
satisfying \eqref{eqn: design}.
\end{theorem}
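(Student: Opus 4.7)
The plan is to establish an explicit row-correspondence between $\mb H$ and $\mb H'$ and then translate both directions of the iff via Lemma 3 and Theorem 2.

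\textbf{Setting up the correspondence.} By the block cyclic shift property of each $\mb H_m$, one has $\phi^{j_m}(\mb H_m)[i] = \phi^{i+j_m}(\mb H_m[0]) = \mb H_m[(i+j_m) \bmod Z]$, so $\mb H'$ is obtained from $\mb H$ by cyclically permuting the rows within each block row. In particular $\mb H$ and $\mb H'$ contain the same multiset of rows in each block row, hence have identical column weights and identical $\omega_{LB}$. Moreover, since each $j_m \in [L)$, $\pi^{j_m}(\mc C_{m,0,0}) = \mc C_{m,0,j_m}$, so the rows of $\mb H'$ indexed by the canonical set $\cup_m \mc C_{m,0,0}$ coincide with the rows of $\mb H$ indexed by $\cup_m \mc C_{m,0,j_m}$. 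By Theorem 2, as $(j_0,\ldots,j_{M-1})$ ranges over $[L)^M$, the set $\cup_m \mc C_{m,0,j_m}$ ranges over exactly the feasible $\mc T_0$ for $S=1$ partitions of $\mb H$.

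\textbf{Sufficiency.} If $\mb H'$ satisfies \eqref{eqn: design}, Lemma 3 applied to $\mb H'$ gives $\omega(\mb H'_{\cup_m \mc C_{m,0,0}}) = \omega_{LB}$. By the correspondence, $\mb H_{\cup_m \mc C_{m,0,j_m}}$ and $\mb H'_{\cup_m \mc C_{m,0,0}}$ contain the same rows, so $\omega(\mb H_{\cup_m \mc C_{m,0,j_m}}) = \omega_{LB}$, and hence $\mb H$ admits an optimal partition.

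\textbf{Necessity.} Conversely, suppose $\mb H$ admits an optimal partition $(S^*, \mc T_0^*)$. By the correspondence, it suffices to exhibit $(\ell_0,\ldots,\ell_{M-1}) \in [L)^M$ such that $(1, \cup_m \mc C_{m,0,\ell_m})$ is also optimal; then setting $j_m = \ell_m$ and applying Lemma 3 to the resulting $\mb H'$ forces it to satisfy \eqref{eqn: design}. The subcase $S^* = 1$ is immediate from Theorem 2, which directly supplies the required $(\ell_m)$. The main obstacle is the subcase $S^* > 1$: here Theorem 2 only guarantees that $\mc T_{0,m}^*$ is a union of sub-cosets at $S^*$ distinct residues modulo $L$, rather than a single $L$-coset, so one cannot simply read $\ell_m$ off $\mc T_0^*$. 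My plan for this case is to leverage the tightness $\omega(\mb H_{\mc T_0^*}) = \omega_{LB}$ together with the block cyclic shift property of each $\mb H_m$, and then argue via averaging/pigeonhole over the $L^M$ candidate $(1,\cdot)$ partitions that at least one coset choice $(\ell_0,\ldots,\ell_{M-1})$ attains $\omega_{LB}$; intuitively, the existence of an $S^*>1$ optimal forces the $1$-entries of each $\mb H_m$ to distribute across residue classes mod $L$ so evenly that no coset choice can exceed the bound in every column.
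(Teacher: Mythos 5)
Your correspondence $\mb H'_{\cup_m \mc C_{m,0,0}} = \mb H_{\cup_m \mc C_{m,0,j_m}}$ together with the two applications of Lemma \ref{lemma: design} reproduces the paper's proof exactly: the sufficiency direction and the $S^*=1$ subcase of necessity are correct and are, in fact, all that the paper itself proves. Where you part ways is in trying to also cover the subcase $S^*>1$. The paper's necessity argument opens with ``Suppose that $\mb H$ has a partition $(1,\mc T_0)$,'' i.e., it silently reads ``has a partition'' as ``has an $S=1$ partition,'' consistent with the surrounding subsection, which fixes $S=1$ throughout.

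The additional step you plan for $S^*>1$ is not merely incomplete; it cannot be carried out. You want to show that the existence of an optimal partition with $S^*>1$ forces, by averaging or pigeonhole over the $L^M$ coset choices, the existence of an $S=1$ partition attaining $\omega_{LB}$. The paper's own data refute this: for PCM 1 with $L=12$ (Table \ref{table: simulation1}), the enumerative search attains $\omega_{LB}=1$ only at $S^*=4$, and the accompanying text states explicitly that $\omega_{LB}$ is ``achievable only for $S^*>1$'' in that case. Hence an $S^*>1$ optimum does not imply an $S=1$ optimum, and no pigeonhole over the coset choices $(\ell_0,\dots,\ell_{M-1})$ can close this gap. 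The resolution is to adopt the paper's intended reading of the statement (partitions with $S=1$), under which your argument is already complete; read with arbitrary $S$, the ``only if'' direction of the theorem is false.
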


\begin{proof} Let $\mc T_0= \cup_{m \in [M)} \mc C_{m,0,j_m}$ and
 $\mc T'_0= \cup_{m \in [M)} \mc C_{m,0,0}$. The  proof is based the fact
 $\mb H_{\mc T_0}=\mb H'_{\mc T'_0}$.

\emph{Sufficiency:} Assume that $\mb H'$  satisfies \eqref{eqn: design}.
According to Lemma \ref{lemma: design},  $(1, \mc T'_0) $ is an optimal solution with $\omega(\mb H'_{\mc T'_0}) = \omega_{LB}$.
Then, we have $\omega(\mb H_{\mc T_0})=\omega(\mb H'_{\mc T'_0})=\omega_{LB}$.

\emph{Necessity:}
Suppose that $\mb H$ has a partition $(1, \mc T_0)$ with $\omega(\mb H_{\mc{T}_0})= \omega_{LB}$.
That is, $(1, \mc T'_0) $ is an optimal solution with $\omega(\mb H'_{\mc T'_0}) =\omega(\mb H_{\mc{T}_0})= \omega_{LB}$.
According to Lemma \ref{lemma: design}, \eqref{eqn: design} is satisfied for $\mb H'$.
\end{proof}

Moreover, we can design a PCM achieving the lower bound for arbitrary VN degree sequence $\mb d$.
\begin{theorem}\label{Theorem: column }
Given an arbitrary $\mb d$, there always exists an $\mb H$ constructed by Strategy 2, achieving $\omega_{LB}$.
\end{theorem}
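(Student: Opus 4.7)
My plan is to reduce the theorem to a single non-vacuity check: that criterion~1 of Strategy~2 always leaves at least one candidate CN. If this holds, Algorithm~\ref{algo: PEG} driven by Strategy~2 runs to completion (criteria~2--4 only refine within an already non-empty survivor set), and because criterion~1 enforces \eqref{eqn:strategy2} at every insertion, the final $\mb H$ automatically satisfies \eqref{eqn: design}. Lemma~\ref{lemma: design} combined with the lower bound in Theorem~\ref{theorem: lower bound} then forces $\omega(\mb H_{\mc T_0}) = \omega_{LB}$ for the straightforward partition $(1, \cup_{m \in [M)} \mc C_{m,0,0})$, giving the claim.

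\textbf{Counting setup.} Fix an iteration in which $t$ CESs have already been added to the VN block based at $v_j$ (with $j \equiv 0 \bmod Z$ and $t \in [d_j)$) and the algorithm is about to insert the $(t+1)$-th. For $z \in [Z)$ let $p_z = |\{x \in \mc N(v_{j+z}) : x \equiv 0 \bmod L\}|$; criterion~1 demands that after inserting $(c_i, v_j)_Z$ every $p_z$ stays at most $\omega_{LB}$. Because every earlier CES $(c_{i'}, v_{j'})_Z$ with $j' \ne j$ touches only VNs in its own block, $p_z$ is determined solely by the $t$ CESs already placed in the current block. Moreover, since $L \mid Z$, Definition~\ref{def: pi} yields $\pi^{z}(i') \bmod L = (i' + z) \bmod L$, so a placed CES $(c_{i'}, v_j)_Z$ contributes $1$ to $p_z$ exactly when $z \equiv -i' \bmod L$. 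Grouping the placed CESs by the residue $r = i' \bmod L$ and writing $n_r$ for the count gives $p_z = n_{(-z) \bmod L}$ and $\sum_{r \in [L)} n_r = t$.

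\textbf{Pigeonhole step.} Adding a new CES with $i \bmod L = r^*$ then increments only $n_{r^*}$ and leaves the remaining $n_r$ untouched, so the column-weight part of criterion~1 is equivalent to $n_{r^*} < \omega_{LB}$. Because $\omega_{LB} = \lceil \max_k d_k / L \rceil$ we have $d_j \le L\,\omega_{LB}$, and therefore
\begin{equation*}
\sum_{r \in [L)} n_r \;=\; t \;\le\; d_j - 1 \;<\; L\,\omega_{LB},
\end{equation*}
so pigeonhole furnishes at least one residue $r^*$ with $n_{r^*} < \omega_{LB}$. Among the $MZ/L$ CNs $c_i$ with $i \equiv r^* \bmod L$, exactly $n_{r^*}$ are already neighbours of $v_j$ (one per previously placed CES in that residue class); since $n_{r^*} < \omega_{LB} \le MZ/L$, at least one such $c_i$ still satisfies $(c_i, v_j) \notin \mc E$ and thereby meets both parts of criterion~1.

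\textbf{Expected obstacle.} The delicate piece is the decoupling that collapses the $Z$ quantities $p_z$ into the $L$ counters $n_r$: (i) CESs from earlier VN blocks cannot touch the current block on the VN side, and (ii) the identity $\pi^{z}(i') \bmod L = (i' + z) \bmod L$, which is exactly where the divisibility $L \mid Z$ is consumed. Once these two observations are in place the argument is routine pigeonhole, and the final availability check leans on the easy bound $\omega_{LB} \le MZ/L$, which follows from $\max_k d_k \le MZ$ together with $L \mid Z$.
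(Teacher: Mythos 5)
Your proof is correct and follows essentially the same route as the paper's: both reduce criterion~1 of Strategy~2 to a count of already-placed neighbours of $v_j$ in each residue class modulo $L$ (using $L\mid Z$ and the quasi-cyclic structure), and both conclude from $d_j\le\omega(\mb H)\le L\,\omega_{LB}$ that a residue class with fewer than $\omega_{LB}$ occupants is always available. Your write-up is somewhat more explicit than the paper's — it spells out the pigeonhole step and the final check that an unused CN exists within the chosen residue class ($n_{r^*}<\omega_{LB}\le MZ/L$), which the paper leaves implicit — but the underlying argument is the same.
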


\begin{proof}
For each $j \in \{0,Z,\dots,(N-1)Z\}$ and $z\in [Z)$, we have $x \in \mc N(v_{j})$ iff $\pi^z(x) \in \mc N(v_{j+z})$ since $\mb H$ is a QC-LDPC PCM.
Therefore, we have $|\{x \in \mc N(v_{j+z}) : x \equiv 0 \bmod L\}|$=$|\{x \in \mc N(v_{j}) : x \equiv -z \bmod L\}|$.
Accordingly, to make sure that \eqref{eqn:strategy2} is satisfied, we just need
\begin{equation}\label{eqn:design v_j}
|\{x \in \mc N(v_{j}) : x \equiv -z \bmod L\}| \leq w_{LB}, \forall z \in [Z).
\end{equation}
On the other hand,  it is easy to verify that  $\mc N(v_{j})=\cup_{l \in [L)}\{l+0,l+Z,l+2Z,\ldots,l+(\omega_{LB}-1)Z\}$
satisfies \eqref{eqn:design v_j} with  $L\cdot \omega_{LB} $ elements.
According to Theorem \ref{theorem: lower bound}, $L\cdot \omega_{LB} \geq \omega(\mb H) \geq d_j$, for each $j \in [NZ)$.
Therefore, there always exist candidate CNs after performing criterion 1 in Strategy 2.
\end{proof}

\subsection{PCM Design for Achieving Desired Layer Distance}
In this subsection, we construct a PCM with desired layer distance $k$.
We focus on $k>0$, i.e. $\omega(\mb H_{\mc T_0})=1$.

\begin{lemma}\label{lemma: design with layer distance}
$(S, \mc T_0) = (1,\cup_{m \in [M)} \mc C_{m,0,0})$ is a feasible solution with layer distance at least $k>0$ iff
for any $j \in [NZ)$,
\begin{equation}\label{eqn: design k}
\left\{\begin{array}{l}
\mc N(v_j) \cap \mc N(v_{\pi^t(j)}) = \emptyset, \forall t \in [1, k) ,\\
| \{x \in \cup_{t \in [k)} \mc N(v_{\pi^t(j)}): x \equiv 0 \bmod L\} |  \leq 1.
\end{array}\right.
\end{equation}
%
%
\end{lemma}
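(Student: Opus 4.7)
The plan is to apply Theorem~\ref{theorem: conflict} with $S=1$, which recasts $d(1,\mc T_0)\ge k$ as the conjunction of (a) $\mb H^{(1,k)}_{\mc T_0}$ being a binary matrix and (b) $\omega(\mb H^{(1,k)}_{\mc T_0})\le 1$, and then to translate (a) and (b) separately into the two clauses of the lemma.

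First, I would unfold an entry using the block cyclic shift property of $\mb H$:
\[
\mb H^{(1,k)}[i][j] \;=\; \sum_{s\in[k)}\mb H[i]\bigl[\pi^{-s}(j)\bigr] \;=\; \sum_{s\in[k)}\mb H[\pi^{s}(i)][j].
\]
Condition (a) says this sum is at most $1$ for every $i\in\mc T_0$ and every $j\in[NZ)$, i.e., no row of $\mb H$ restricted to $\mc T_0$ carries two $1$'s among the column positions $\{\pi^{-s}(j):s\in[k)\}$. Because the $\pi$-orbits of $\mc T_0=\{i\in[MZ): i\equiv 0\bmod L\}$ cover all of $[MZ)$, this ``per row in $\mc T_0$'' statement upgrades (via the identity $\mb H[\pi^r(i)][j]=\mb H[i][\pi^{-r}(j)]$) to all rows of $\mb H$; the pairwise reformulation becomes exactly $\mc N(v_j)\cap \mc N(v_{\pi^t(j)})=\emptyset$ for every $j$ and every $t\in[1,k)$, which is clause~1 of the lemma.

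Next, assuming clause~1, every entry of $\mb H^{(1,k)}_{\mc T_0}$ is in $\{0,1\}$, so $\omega(\cdot)\le 1$ reduces to a column-sum statement. Using the second unfolding and $\mc T_0=\{i:i\equiv 0\bmod L\}$, the $j$-th column sum equals
\[
\sum_{s\in[k)}\bigl|\{x\in\mc N(v_{\pi^{-s}(j)}):\,x\equiv 0\bmod L\}\bigr|.
\]
A short sub-argument using clause~1 shows that the sets $\mc N(v_{\pi^{-s}(j)})$ for $s\in[k)$ are pairwise disjoint: for distinct $s,s'\in[k)$, the two VNs $v_{\pi^{-s}(j)}$ and $v_{\pi^{-s'}(j)}$ are related by a $\pi^t$-shift with $t\in[1,k)$ after relabelling, so clause~1 forces their CN neighborhoods to be disjoint. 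The sum therefore collapses to the cardinality of the union, and requiring it to be $\le 1$ for every $j$ is, after the change of variable $j\mapsto\pi^{k-1}(j)$, exactly clause~2. Both directions of the ``iff'' drop out together from these two translations.

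The main obstacle will be the index bookkeeping: carefully distinguishing between $\phi^s$ acting on rows and $\pi^s$ acting on row and column indices, and properly invoking the duality $\mb H[\pi^r(i)][j]=\mb H[i][\pi^{-r}(j)]$ when moving conditions from rows to columns. I do not foresee any further case analysis beyond this mechanical translation once those identities are in place.
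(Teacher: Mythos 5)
Your proposal is correct and follows essentially the same route as the paper: invoke Theorem~\ref{theorem: conflict} with $S=1$ and identify clause~1 with binarity of $\mb H^{(1,k)}$ and clause~2 with the column-weight condition on $\mb H^{(1,k)}_{\mc T_0}$. You are in fact somewhat more careful than the paper about the index bookkeeping (the $\pi^{-s}$ versus $\pi^{t}$ reindexing and the orbit argument upgrading the row condition from $\mc T_0$ to all of $[MZ)$), which the paper's proof passes over silently.
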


\begin{proof}
\emph{Sufficiency: }Assume that \eqref{eqn: design k} is satisfied.
Recall that $\mb H^{(1,k)}=\mb H+\phi^1(\mb H)+\phi^{2 }(\mb H)+\cdots+\phi^{k-1}(\mb H)$.
Since $\mc N(v_j) \cap \mc N(v_{\pi^t(j)}) = \emptyset, \forall t \in [1, k)$, $\mb H^{(1,k)}$ is binary.
For each $j\in [NZ)$, we then have $\omega_j(\mb H^{(1,k)}_{\mc T_0})=| {x \in \cup_{t \in [k)} \mc N(v_{\pi^t(j)}): x \equiv 0 \bmod L} |  \leq 1$, implying $\omega(\mb H^{(1,k)}_{\mc T_0})=1$. Thus, $d(1,\mc T_0)\geq k$ by Theorem \ref{theorem: conflict}.

\emph{Necessity:} Assume that $(S, \mc T_0) = (1,\cup_{m \in [M)} \mc C_{m,0,0})$ is a feasible solution with layer distance at least $k$.
According to Theorem \ref{theorem: conflict}, we have $\mb H^{(1,k)}$ is binary and $\omega(\mb H^{(1,k)}_{\mc T_0})=1$.
Therefore, we have $\mc N(v_j) \cap \mc N(v_{\pi^t(j)}) = \emptyset, \forall t \in [1, k)$ and for each $j\in [NZ)$, $| {x \in \cup_{t \in [k)} \mc N(v_{\pi^t(j)}): x \equiv 0 \bmod L} | =\omega_j(\mb H^{(1,k)}_{\mc T_0})\leq 1$.
\end{proof}

By means of Lemma \ref{lemma: design with layer distance}, we can obtain a PCM with desired layer distance $k$ by the following Strategy:

\textbf{\emph{Strategy 3:}}

1) Select the CNs $c_i$ such that  $(c_i, v_j) \notin \mc E$ and
\begin{equation}
\left\{\begin{array}{l}
\mc N'(v_{j+z}) \cap \mc N'(v_{\pi^t(j+z)}) = \emptyset, \forall t \in [1, k) ,\\
| \{x \in \cup_{t \in [k)} \mc N'(v_{\pi^t(j+z)}): x \equiv 0 \bmod L\} |  \leq 1,
\end{array}\right.
\end{equation}
for each $z \in [Z)$, where  $\mc N'(v_{j+z})\triangleq \mc N(v_{j+z}) \cup \{\pi^z(i)\}$ and $\mc N'(v_{\pi^t(j+z)})\triangleq\mc N(v_{\pi^t(j+z)}) \cup \{\pi^{z+t}(i)\}$ denote the index sets of CNs connected to $v_{j+z}$ and $v_{\pi^t(j+z)}$ after the CES $(c_i,v_j)_Z$ is added in $\mc E$, respectively.

2) The same as criteria 2 to 4 in Strategy 1.

Analogous to Theorems \ref{theorem: design principle} and \ref{Theorem: column }, we have the following results.

\begin{theorem}
$\mb H$ has a partition with layer distance at least $k$ iff there exists a QC-LDPC PCM
\begin{eqnarray}
	\mathbf{H'}=\left[\begin{array}{c}
	 \phi ^{j_0}(\mb H_0) \\
     \phi ^{j_1}(\mb H_1) \\
	\vdots\\
	 \phi ^{j_{M-1}}(\mb H_{M-1}) \\
	\end{array}\right], j_0, j_1,..., j_{M-1} \in [L),
	\end{eqnarray}
satisfying \eqref{eqn: design k}.
\end{theorem}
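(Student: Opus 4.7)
The plan is to mirror the structure used in the proof of Theorem \ref{theorem: design principle}, relating a partition of $\mb H$ parametrized by block-shift offsets $j_0,\ldots,j_{M-1}\in[L)$ to the canonical partition of $\mb H'$ given by $\mc T'_0=\cup_{m\in[M)}\mc C_{m,0,0}$. Concretely, I would set $\mc T_0=\cup_{m\in[M)}\mc C_{m,0,j_m}$ for $\mb H$ and argue that the layer-distance property is inherited between the two partitioned matrices via a clean identity on their $k$-fold shift sums.

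The technical heart is the identity $\mb H^{(1,k)}_{\mc T_0}=(\mb H')^{(1,k)}_{\mc T'_0}$ as multisets of rows. Indeed, the $m$-th block row of $\mb H^{(1,k)}$ equals $\sum_{t=0}^{k-1}\phi^t(\mb H_m)$, so by linearity of the permutation $\phi^{j_m}$, the $m$-th block row of $(\mb H')^{(1,k)}$ is $\phi^{j_m}\!\left(\sum_{t=0}^{k-1}\phi^t(\mb H_m)\right)$. Selecting the rows of $\mb H^{(1,k)}$ indexed by $\mc C_{m,0,j_m}=\{mZ+j_m+lL:l\in[Z/L)\}$ and those of $(\mb H')^{(1,k)}$ indexed by $\mc C_{m,0,0}=\{mZ+lL:l\in[Z/L)\}$, the block cyclic shift property pairs them row-by-row through the shift $\phi^{j_m}$, so the two stacks share the same column weights and the same binariness. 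Invoking Theorem \ref{theorem: conflict}, this yields $d(1,\mc T_0)\ge k$ for $\mb H$ iff $d(1,\mc T'_0)\ge k$ for $\mb H'$.

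With this equivalence established, both directions of the theorem reduce to Lemma \ref{lemma: design with layer distance} applied to $\mb H'$ with the canonical partition $\mc T'_0$. For sufficiency, if $\mb H'$ satisfies \eqref{eqn: design k} then $(1,\mc T'_0)$ has layer distance at least $k$ for $\mb H'$, and the equivalence transfers this to $(1,\mc T_0)$ for $\mb H$. For necessity, restricting attention to $S=1$ (justified by the paper's earlier assertion that $S=1$ is always sufficient), Theorem \ref{theorem: feasible solution} lets me write any qualifying feasible partition as $\mc T_0=\cup_m\mc C_{m,0,j_m}$; defining $\mb H'$ from these $j_m$ produces a matrix for which $(1,\mc T'_0)$ has layer distance at least $k$, and Lemma \ref{lemma: design with layer distance} then yields \eqref{eqn: design k}.

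The main obstacle will be the careful verification of the multiset identity $\mb H^{(1,k)}_{\mc T_0}=(\mb H')^{(1,k)}_{\mc T'_0}$, which hinges on tracking how the per-block shifts $\phi^{j_m}$ interact with the $k$-fold shift sum and with the reindexing from $\mc C_{m,0,j_m}$ to $\mc C_{m,0,0}$. Once this identity is in hand, the remainder is a straightforward reuse of the infrastructure provided by Theorems \ref{theorem: conflict} and \ref{theorem: feasible solution} together with Lemma \ref{lemma: design with layer distance}, so the proof closely parallels that of Theorem \ref{theorem: design principle}.
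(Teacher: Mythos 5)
Your proposal is correct and follows essentially the route the paper intends: the paper omits this proof, declaring it ``analogous'' to Theorem \ref{theorem: design principle}, and your argument is exactly that analogue, with the key multiset identity $\mb H^{(1,k)}_{\mc T_0}=(\mb H')^{(1,k)}_{\mc T'_0}$ playing the role that $\mb H_{\mc T_0}=\mb H'_{\mc T'_0}$ plays there, combined with Theorem \ref{theorem: conflict} and Lemma \ref{lemma: design with layer distance}. Your explicit restriction to $S=1$ in the necessity direction is the same implicit assumption the paper makes in its proof of Theorem \ref{theorem: design principle}, so no discrepancy arises.
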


\begin{theorem}
Given an arbitrary VN degree sequence $\mb d$ and any desired layer distance $k \leq d_{UB}$, there exists an $\mb H$ constructed by Strategy 3, achieving layer distance $k$.
\end{theorem}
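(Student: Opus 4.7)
The plan is to mimic the proof of Theorem \ref{Theorem: column }: for each $j\in\{0,Z,\ldots,(N-1)Z\}$, I would exhibit an explicit neighbor set $\mc N(v_j)$ of size $d_j$ that satisfies the image of Strategy 3's criterion 1 under the QC symmetry, thereby confirming that at every step of the QC-PEG construction a valid CN is available. First, using the QC identity $\mc N(v_{j+z})=\pi^{z}(\mc N(v_j))$ together with $L\mid Z$ (which implies $\pi^{z}(y)\equiv y+z\pmod L$), I would rewrite the two subconditions of Strategy 3's criterion 1 as conditions on a single representative: (i) $\mc N(v_j)\cap\pi^{t}(\mc N(v_j))=\emptyset$ for every $t\in[1,k)$, and (ii) for each $z\in[L)$, $|\{y\in\mc N(v_j):(y+z)\bmod L\in\{0,L-1,\ldots,L-k+1\}\}|\le 1$. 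As $z$ sweeps $[L)$, the target set sweeps every cyclic window of $k$ consecutive residues mod $L$, so (ii) is equivalent to requiring that the residues of $\mc N(v_j)$ mod $L$ be pairwise cyclically $k$-separated, and this residue condition in turn implies (i), since two elements related by $\pi^{t}$ with $t\in[1,k)$ would have residues at cyclic distance at most $t<k$.

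Invoking the hypothesis $k\le d_{UB}=\lfloor L/\omega(\mb H)\rfloor$, I then obtain $kd_j\le k\omega(\mb H)\le L$, so the choice $\mc N(v_j)=\{0,k,2k,\ldots,(d_j-1)k\}\subseteq[0,Z)$ is legitimate (since $(d_j-1)k<L\le Z$) and its residues are $d_j$ distinct, cyclically $k$-separated residues mod $L$, satisfying both (i) and (ii). Every prefix of this arithmetic progression is likewise valid, so at each of the $d_j$ selection steps for $v_j$ the QC-PEG algorithm has at least one CN---the next element of the progression---meeting criterion 1 of Strategy 3. Consequently the algorithm terminates with an $\mb H$ satisfying \eqref{eqn: design k}, and Lemma \ref{lemma: design with layer distance} then guarantees that $\mb H$ has layer distance at least $k$.

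The main obstacle I anticipate is the greedy nature of Strategy 3: criteria 2--4 may in principle steer the construction into a partial $\mc N(v_j)$ from which no completion of size $d_j$ remains, even though a global target does. Following the spirit of Theorem \ref{Theorem: column }, the argument above sidesteps this by only asserting the existence of a valid candidate at each step (namely the next element of the arithmetic progression), which is enough to establish the theorem as an existence statement; a fully rigorous treatment would either constrain the tie-breaking criteria or bound the number of residues blocked by prior picks so as to never exhaust the budget $L$ during any of the $d_j$ steps.
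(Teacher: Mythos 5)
The paper states this theorem without proof (it is introduced only as ``analogous to Theorems~\ref{theorem: design principle} and~\ref{Theorem: column }''), so your reconstruction is essentially the intended argument, and its first two paragraphs are sound: the reduction of Strategy~3's criterion~1 to a condition on the single representative $\mc N(v_j)$ via $\pi^{s}(y)\equiv y+s \pmod L$ is correct, the reformulation as ``the residues of $\mc N(v_j)$ mod $L$ are pairwise cyclically $k$-separated'' is correct (and does imply the disjointness subcondition), and the witness $\mc N(v_j)=\{0,k,\dots,(d_j-1)k\}$ is legitimate because $k d_j\le k\,\omega(\mb H)\le L$.

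However, the obstacle you flag in your last paragraph is not a formality --- it is exactly where the ``analogy'' with Theorem~\ref{Theorem: column } breaks down, and neither you nor the paper closes it. For Strategy~2 the constraint is a per-residue-class capacity (at most $\omega_{LB}$ elements in each class mod $L$), so \emph{every} valid partial neighborhood of size less than $d_j\le L\omega_{LB}$ extends; that greedy construction cannot dead-end no matter how criteria 2--4 break ties. For Strategy~3 the constraint is cyclic $k$-separation, and valid partial configurations need not extend: take $L=6$, $k=2$, $\omega(\mb H)=d_j=3$ (so $k=d_{UB}$); the partial residue set $\{0,3\}$ satisfies criterion~1, yet every residue $r\in[6)$ lies within cyclic distance $1$ of $0$ or of $3$ (or duplicates one of them), so no third CN survives criterion~1. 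Since criteria 2 and 3 are deterministic filters, nothing in the argument prevents them from steering every run of the algorithm into such a dead end. A complete proof must either (a) show that at each step at least one survivor of criteria 1--3 keeps the partial neighborhood completable to a $k$-separated set of size $d_j$, or (b) downgrade the claim to the existence of a PCM satisfying \eqref{eqn: design k} --- which your arithmetic-progression witness does establish via Lemma~\ref{lemma: design with layer distance} --- and augment criterion~1 with an extendability (lookahead) check. As written, the assertion ``there exists an $\mb H$ constructed by Strategy~3'' is not justified.
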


In practice, for given $L$ and $\mb d$, noting that $\omega(\mb H)=\max_{j \in [NZ)}d_j$, the upper bound of layer distance $d_{UB}$ can be computed according to Theorem \ref{theorem: column weight with layer distance}.
Then, based on the value $d_{UB}$, we choose different strategies to construct a PCM for given $M,N$ and $Z$.
More specifically, if $d_{UB}< 1$, we have $\omega_{LB} > 1$ and there does not exist PCM avoiding data dependency within a layer.
In this case, we target to  construct a PCM by Strategy 2 to achieve $\omega_{LB}$.
For the case $d_{UB} \geq 1$, we can construct a PCM  by Strategy 3, which aims to achieve $\omega_{LB}=1$ and layer distance $1\leq k \leq d_{UB}$.
We remark that a larger desired layer distance $k$ leads to shorter computation delay but fewer options in code construction which may affect error-correction performance.
Thus, we should set $k$ according to the requirement of practice.

\section{Simulation Results}
\begin{table}[h]
	\small
	\renewcommand{\arraystretch}{1.2}
	\caption{Parameters for the 5G LDPC Codes}
	\label{table: codes}
	\centering
	\setstretch{1}
	\begin{tabular}{ccccc}
		\toprule
		PCMs & $M$  & $N$  & $Z$      & $\omega(\mb{H})$ \\ \midrule
		1                                                       & 5  & 27 & 384 & 5      \\
		2                                                       & 46 & 68 &    384                  & 30     \\
		3                                                       & 7  & 17 & 112 & 6      \\
		4                                                       & 17 & 27 &    112                  & 13     \\
		5                                                       & 42 & 52 &    112                  & 23     \\
		\bottomrule
	\end{tabular}
	\vspace{-0.2cm}
\end{table}

In this section, we evaluate the validity of our proposed PCM partition methods for the 5G LDPC codes \cite{5gChannel1}.
The key parameters of the PCMs of these codes are listed in Table \ref{table: codes}.

\begin{table}[H]
\renewcommand{\arraystretch}{1.2}
	\caption{PCM Partition Results for the 5G LDPC Codes}
	\label{table: simulation1}
	\centering
	\small
	\setstretch{0.95}
	\begin{tabular}{|c|c|c|c|c|}
		\hline
		\multirow{2}{*}{PCMs} & \multirow{2}{*}{$L$} & \multirow{2}{*}{$\omega _{LB}$} & \multicolumn{2}{c|}{$\omega(\mb H_{\mc T^*_0}), S^*$} \\ \cline{4-5}
		&                    &                            & Algorithm \ref{algo: enumerative} & Algorithm  \ref{algo: greedy}  \\ \hline
		\multirow{15}{*}{1} & 2   & 3     & 3, 1        & 3, 1             \\
		& 3   & 2        	& $-$   & $\textbf{3}, 1$     \\
		& 4   & 2    & 2, 1        			& 2, 1             \\
		& 6  & 1      	& $-$    & $\textbf{2}, 1$    \\
		& 8  & 1       	& $-$     & $\textbf{2}, 1$  \\
		& 12 & 1      	& 1, 4       & $\textbf{2}, 1$       \\
		& 16 & 1     & 1, 1        			& 1, 1             \\
& 24 & 1     & 1, 1        			& 1, 1             \\
& 32 & 1     & 1, 1        			& 1, 1             \\
& 48 & 1     & 1, 1        			& 1, 1             \\
& 64 & 1     & 1, 1        			& 1, 1             \\
& 96 & 1     & 1, 1        			& 1, 1             \\
& 128 & 1     & 1, 1        			& 1, 1             \\
& 192 & 1     & 1, 1        			& 1, 1             \\
& 384 & 1     & 1, 1        			& 1, 1             \\ \hline
		\multirow{15}{*}{2} & 2  & 15    & 15, 1       & 15, 1            \\
		& 3  & 10    & 10, 1       & 10, 1            \\
		& 4  & 8     & 8, 1        & 8, 1             \\
		& 6  & 5     & 5, 1        & 5, 1             \\
		& 8  & 4     & 4, 1        & 4, 1             \\
		& 12 & 3     & 3, 1        & 3, 1             \\
		& 16 & 2     & 2, 1        & 2, 1             \\
		& 24 & 2     & 2, 1        & 2, 1             \\
		& 32 & 1     & 1, 1        & 1, 1             \\
	& 48 & 1     & 1, 1        			& 1, 1             \\
& 64 & 1     & 1, 1        			& 1, 1             \\
& 96 & 1     & 1, 1        			& 1, 1             \\
& 128 & 1     & 1, 1        			& 1, 1             \\
& 192 & 1     & 1, 1        			& 1, 1             \\
& 384 & 1     & 1, 1        			& 1, 1             \\ \hline
		\multirow{9}{*}{3}   & 2                      & 3          & 3, 1    & 3, 1             \\
		& 4                  & 2                      & 2, 1          & 2, 1             \\
		& 7                  & 1                               & 1, 4    & \textbf{2}, 1          \\
		& 8                  & 1                               & 1, 1       & \textbf{2}, 1       \\
		& 14                 & 1                      & 1, 1          & 1, 1             \\
& 16                 & 1                      & 1, 1          & 1, 1             \\
& 28                 & 1                      & 1, 1          & 1, 1             \\
& 56                 & 1                      & 1, 1          & 1, 1             \\
& 112                 & 1                      & 1, 1          & 1, 1             \\ \hline
		\multirow{9}{*}{4}   & 2                      & 7          & 7, 1          & 7, 1             \\
		& 4                  & 4                      & 4, 1          & 4, 1             \\
		& 7                  & 2                      & 2, 1          & 2, 1             \\
		& 8                  & 2                      & 2, 1          & 2, 1             \\
		& 14                 & 1                      & 1, 1           & \textbf{2}, 1   \\
		& 16                 & 1                       & 1, 1         & \textbf{2}, 1    \\
		& 28                 & 1                      & 1, 1          & 1, 1             \\
& 56                 & 1                      & 1, 1          & 1, 1             \\
& 112                 & 1                      & 1, 1          & 1, 1             \\ \hline

		\multirow{9}{*}{5}   & 2                      & 12         & 12, 1         & 12, 1            \\
		& 4                  & 6                      & 6, 1          & 6, 1             \\
		& 7                  & 4                      & 4, 1          & 4, 1             \\
		& 8                  & 3                      & 3, 1          & 3, 1             \\
		& 14                 & 2                      & 2, 1          & 2, 1             \\
		& 16                 & 2                      & 2, 1          & 2, 1             \\
		& 28                 & 1                      & 1, 1          & 1, 1             \\
& 56                 & 1                      & 1, 1          & 1, 1             \\
& 112                 & 1                      & 1, 1          & 1, 1             \\ \hline
	\end{tabular}
\vspace{-0.3cm}
\end{table}



\begin{table}[H]
\renewcommand{\arraystretch}{1.2}
	\caption{PCM Partition Results with Desired Layer Distance for the 5G LDPC Codes}
	\label{table: simulation2}
	\centering
	\small
	\setstretch{0.95}
	\begin{tabular}{|c|c|c|c|c|}
		\hline
		\multirow{2}{*}{PCMs} & \multirow{2}{*}{$k$} & \multirow{2}{*}{$ L_{LB}$} & \multicolumn{2}{c|}{$L^*, S^*$} \\ \cline{4-5}
		&                    &                        & Algorithm   \ref{algo: enumerative}   & Algorithm  \ref{algo: greedy}  \\ \hline
		\multirow{3}{*}{1} & 2   & 12     & 24, 1        & 36, 1             \\
		& 3   & 16    & $32, 1$     	& $48,1$       \\
		& 4   & 24    & 64, 1        			& 64, 3             \\\hline
		\multirow{3}{*}{2} & 2  & 64    & 96, 1       & 96, 1            \\
		& 3  & 96    & 96, 1       & 128, 1            \\
		& 4  & 128     & 192, 1        & 192, 1             \\\hline
	
		\multirow{3}{*}{3}   & 2                      & 14          & 28, 1    & 28, 1             \\
		& 3                  & 28                      & 28, 2          & 56, 1             \\
		& 4                  & 28                      & 56, 1          & 56, 1             \\\hline
		\multirow{3}{*}{4}   & 2                      & 28          & 28, 1          & 56, 1             \\
		& 3                  & 56                      & 56, 1          & 56, 1             \\
		& 4                  & 56                      & 112, 1          & 112, 1             \\
\hline
		\multirow{3}{*}{5}   & 2                      & 56   & 56, 1         & 112, 1            \\
		& 3                  & 112                      & 112, 1          & 112, 1             \\
		& 4                  & 112                      & $-$          & $-$             \\\hline
	\end{tabular}
\vspace{-0.3cm}
\end{table}

We apply both (enumerative) Algorithm  \ref{algo: enumerative} and (greedy) Algorithm \ref{algo: greedy}
to obtain the optimal or locally optimal solution $({S^*}, \mc T^*_0)$ to Problem 3,
which may be early terminated once a solution $(S, \mc T_0)$ with $\omega(\mb H_{\mc T_0}) = \omega _{LB}$ is found or the running time exceeds a preset threshold.
Here, we go through all the possible $S$, i.e. all the factors of $Z$.
The partition results are presented in Table \ref{table: simulation1}.

In Table \ref{table: simulation1}, we use the bold face to highlight the $\omega(\mb H_{\mc T^*_0})$ which is greater than $\omega_{LB}$.
We use `$-$' to indicate the case where Algorithm \ref{algo: enumerative} cannot
achieve the lower bound $\omega_{LB}$ within a preset running time limit.

We can see that both algorithms can generally achieve $\omega_{LB}$ or differ by one.
There exist cases (e.g., $L=12$ for PCM 1) where $\omega_{LB}$ is achievable only for $S^* > 1$.
This situation has never been considered in the literature.
Moreover, there exist cases (e.g., $L=8$ for PCM 3) where Algorithm \ref{algo: greedy}  cannot achieve $\omega_{LB}$ while Algorithm \ref{algo: enumerative} can, as they work in a greedy and enumerative fashion, respectively.
However, there also exist cases (e.g., $L=6$ for PCM 1) where Algorithm \ref{algo: enumerative} may exceed the preset time limit due to large search space.

Table \ref{table: simulation2} shows the partition results with desired layer distance for the 5G LDPC codes.
We use $k$ to represent the desired layer distance and $L^*$ to represent the minimum number of layers that may have a scheme with desired layer distance.
We use `$-$' to indicate the case where the algorithm cannot find a feasible solution with desired $k$ within a preset running time limit.
Accordingly, we also list $ L_{LB}$, which is the lower bound of $L^*$ given in Corollary \ref{cor:L_LB}.
We can see that in most cases, both algorithms can find a partition scheme with desired layer distance, where the minimum number of layers $L^*$ is slightly greater than or equal to $L_{LB}$. Since they work in an enumerative and greedy fashion, respectively,
there exist cases (e.g., $k=2$ for PCM 1) where Algorithm \ref{algo: enumerative} can find a desired solution under a lower $L^*$ than that in Algorithm \ref{algo: greedy}.
In addition, there exists a case ($k=4$ for PCM 5) where both algorithms cannot find a partition scheme with desired layer distance.

\begin{figure}[t]
\centering
\includegraphics[scale=0.5]{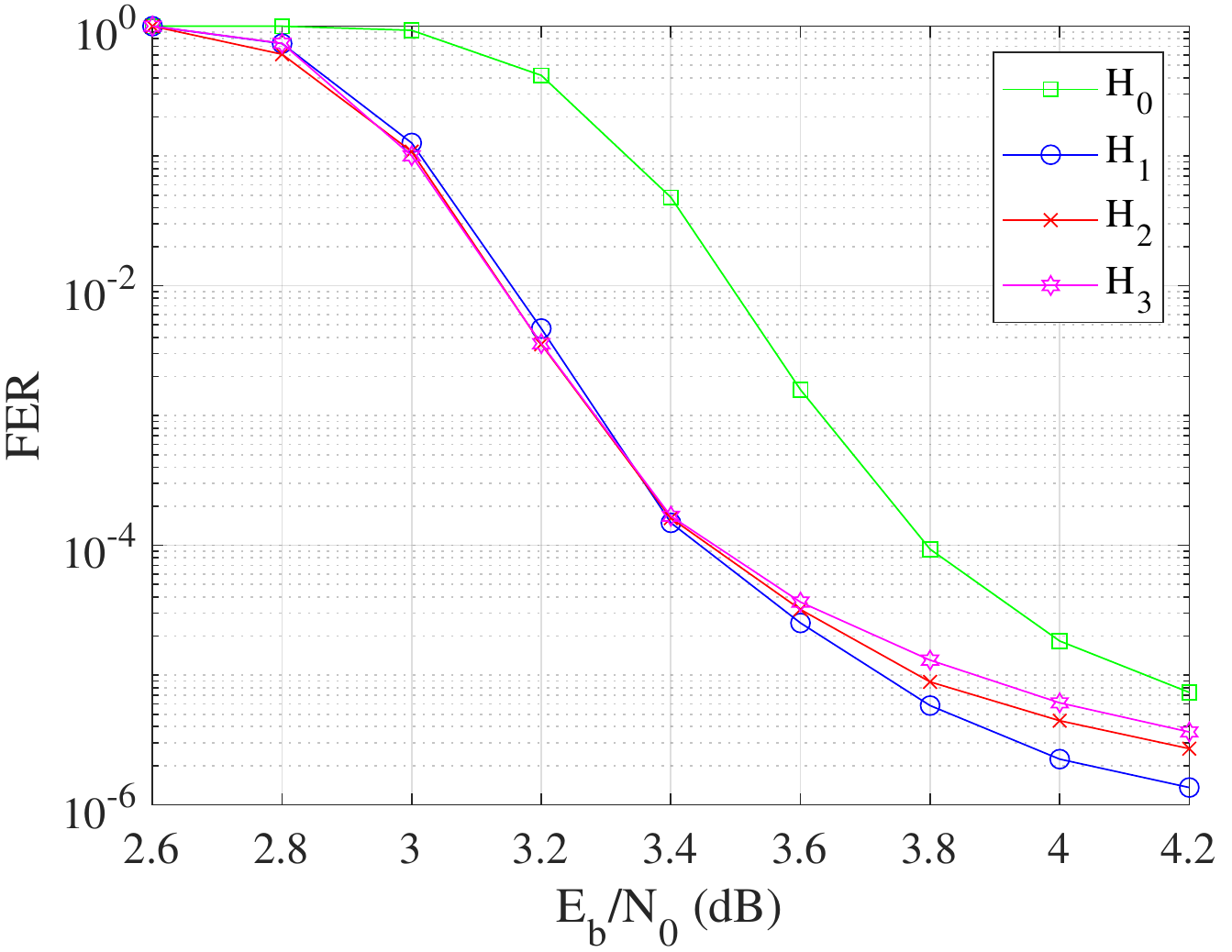}
\caption{Frame error rate (FER) performance of different QC-LDPC codes.}\label{fig: FER}
\vspace{-0.3cm}
\end{figure}

\begin{figure}[t]
\centering
\includegraphics[scale=0.5]{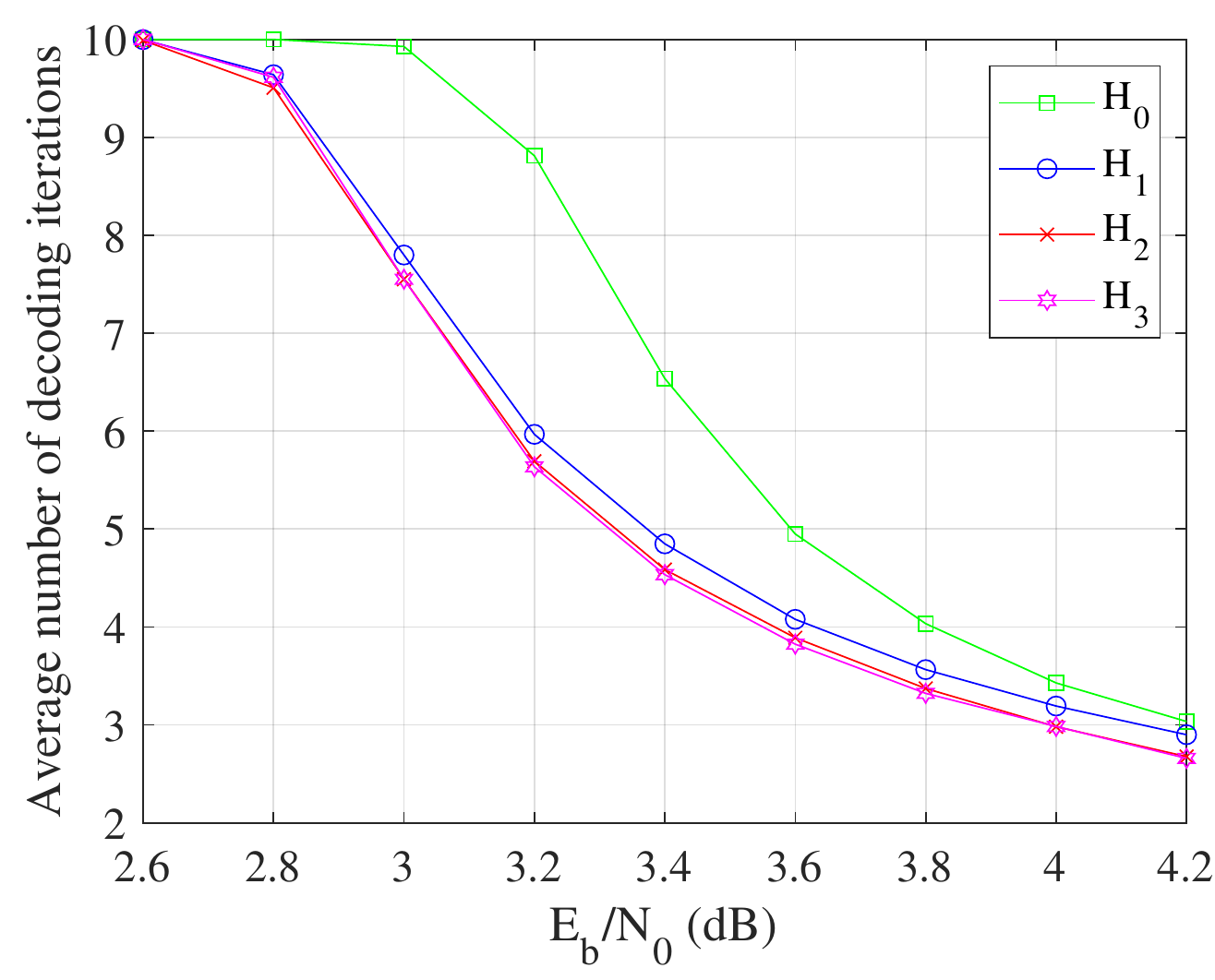}
\caption{Average number of decoding iterations of different QC-LDPC codes.}\label{fig: interation}
\vspace{-0.3cm}
\end{figure}

As can be seen from Tables III and IV, there exist cases that $\omega_{LB}$ or the desired layer distance is not achieved for PCM 1, denoted as $\mb H_{0}$.
We use the modified QC-PEG algorithm to construct two QC-LDPC codes to achieve $\omega_{LB}$ and the desired layer distance respectively.
More specifically, at $L = 6$, $\omega_{LB}$ is not archived, and then we construct a PCM to achieve $\omega_{LB}$ based on Strategy 2,  whose base matrix is denoted as $\mb H_{2}$;
at $L = L_{LB}=12$, the desired layer distance $k=2$ is not archived, and then we construct a PCM to achieve $k$ based on Strategy 3,  whose base matrix is denoted as $\mb H_{3}$.
As a comparison, we construct a PCM based on Strategy 1, whose base matrix is denoted as $\mb H_{1}$.
The base matrices of
$\mb H_{0}$, $\mb H_{1}$, $\mb H_{2}$, and $\mb H_{3}$
are presented in \eqref{eqn:B1}-\eqref{eqn:B4}, respectively.
In contrast to $\mb H_{0}$ used in the 5G standard, we remark that both $\mb H_{2}$ and $\mb H_{3}$ have a partition scheme with either lower maximum column weight or
larger layer distance and keep the same dimension, lifting size and VN degree sequence $\mb d$.

\begin{figure*}[!t]
\begin{small}
\begin{equation}\label{eqn:B1}
\setstretch{0.8}
\setlength{\arraycolsep}{1pt}
\mb B_0=\left[\begin{array}{lllllllllllllllllllllllllllll}
307 & 19 & 50 & 369 & -1 & 181 & 216 & -1 & -1 & 317 & 288 & 109 & 17 & 357 & -1 & 215 & 106 & -1 & 242 & 180 & 330 & 346 & 1 & 0 & -1 & -1 & -1 \\
76 & -1 & 76 & 73 & 288 & 144 & -1 & 331 & 331 & 178 & -1 & 295 & 342 & -1 & 217 & 99 & 354 & 114 & -1 & 331 & -1 & 112 & 0 & 0 & 0 & -1 & -1 \\
205 & 250 & 328 & -1 & 332 & 256 & 161 & 267 & 160 & 63 & 129 & -1 & -1 & 200 & 88 & 53 & -1 & 131 & 240 & 205 & 13 & -1 & -1 & -1 & 0 & 0 & -1 \\
276 & 87 & -1 & 0 & 275 & -1 & 199 & 153 & 56 & -1 & 132 & 305 & 231 & 341 & 212 & -1 & 304 & 300 & 271 & -1 & 39 & 357 & 1 & -1 & -1 & 0 & -1 \\
332 & 181 & -1 & -1 & -1 & -1 & -1 & -1 & -1 & -1 & -1 & -1 & -1 & -1 & -1 & -1 & -1 & -1 & -1 & -1 & -1 & -1 & -1 & -1 & -1 & -1 & 0
\end{array}\right].
\end{equation}
\end{small}
\end{figure*}

\begin{figure*}[!t]
\begin{small}
\begin{equation}\label{eqn:B2}
\setstretch{0.8}
\setlength{\arraycolsep}{1pt}
\mb B_1=\left[\begin{array}{llllllllllllllllllllllllllll}
-1 & 178 & -1 & -1 & 184 & -1 & 268 & -1 & 343 & 368 & -1 & 191 & 144 & -1 & 367 & 112 & 21 & -1 & 193 & -1 & -1 & 177 & 63 & 365 & -1 & 36 & 302 \\
47 & -1 & -1 & -1 & 197 & 241 & 349 & 13 & -1 & -1 & 35 & 173 & -1 & 189 & -1 & 367 & -1 & 186 & 89 & -1 & 360 & 156 & -1 & 316 & -1 & -1 & 14 \\
-1 & -1 & 354 & 128 & -1 & -1 & 311 & 116 & 286 & -1 & -1 & 238 & 69 & 27 & -1 & 198 & -1 & 324 & -1 & 361 & 262 & -1 & 105 & -1 & 165 & 173 & 59 \\
-1 & 176 & -1 & 277 & -1 & 26 & -1 & 252 & -1 & 243 & 279 & -1 & 202 & -1 & 277 & -1 & 189 & -1 & 378 & 322 & -1 & 306 & 158 & -1 & 267 & 67 & 56 \\
-1 & -1 & 189 & -1 & 125 & 10 & -1 & -1 & 5 & 265 & 269 & -1 & -1 & 20 & 56 & -1 & 37 & 130 & -1 & 122 & 31 & -1 & -1 & 76 & 88 & 59 & 123
\end{array}\right].
\end{equation}
\end{small}
\end{figure*}

\begin{figure*}[!t]
\begin{small}
\begin{equation}\label{eqn:B3}
\setstretch{0.8}
\setlength{\arraycolsep}{1pt}
\mb B_2=\left[\begin{array}{llllllllllllllllllllllllllll}
-1 & 253 & -1 & 213 & -1 & 263 & -1 & -1 & 79 & 303 & -1 & 97 & 50 & 23 & -1 & 173 & -1 & 373 & 53 & -1 & 38 & -1 & 93 & 154 & -1 & -1 & 221 \\
-1 & -1 & 141 & 112 & -1 & -1 & 33 & 379 & -1 & 12 & 354 & -1 & 157 & -1 & 312 & -1 & 218 & -1 & 344 & 108 & -1 & 320 & -1 & 126 & 334 & 357 & 370 \\
-1 & -1 & 376 & -1 & 364 & 242 & -1 & 10 & 65 & -1 & -1 & 195 & 315 & 242 & -1 & 219 & 107 & -1 & -1 & 272 & 52 & -1 & 331 & 253 & -1 & 161 & 348 \\
-1 & 40 & -1 & -1 & 306 & -1 & 144 & 222 & -1 & 107 & 11 & -1 & -1 & 104 & 106 & 55 & -1 & 52 & 355 & -1 & 257 & 202 & -1 & -1 & 200 & 338 & 19 \\
301 & -1 & -1 & -1 & 280 & 105 & 210 & -1 & 212 & -1 & 272 & 372 & -1 & -1 & 59 & -1 & 45 & 153 & -1 & 345 & -1 & 6 & 183 & -1 & 156 & 140 & 723
\end{array}\right].
\end{equation}
\end{small}
\end{figure*}

\begin{figure*}[!t]
\begin{small}
\begin{equation}\label{eqn:B4}
\setstretch{0.8}
\setlength{\arraycolsep}{1pt}
\mb B_3=\left[\begin{array}{llllllllllllllllllllllllllllllll}
-1 & 163 & -1 & -1 & 18 & -1 & 10 & 153 & -1 & -1 & 278 & 39 & 209 & -1 & 293 & -1 & 201 & 361 & 171 & -1 & 11 & -1 & 196 & -1 & 69 & 381 & 374 \\
-1 & -1 & 142 & 228 & -1 & 204 & -1 & 2 & 150 & -1 & 346 & -1 & -1 & 302 & 265 & 78 & -1 & -1 & 261 & 227 & 80 & -1 & 98 & 309 & -1 & 170 & 251 \\
-1 & -1 & 354 & -1 & 111 & -1 & 88 & -1 & 237 & 65 & 136 & -1 & 312 & 163 & -1 & -1 & 30 & 101 & 278 & -1 & 37 & 11 & -1 & -1 & 175 & 227 & 183 \\
67 & -1 & -1 & -1 & 254 & 57 & -1 & 163 & 301 & 188 & -1 & 53 & 314 & -1 & -1 & 266 & 31 & -1 & -1 & 144 & -1 & 195 & -1 & 379 & 216 & 220 & 141 \\
-1 & 348 & -1 & 347 & -1 & 279 & 307 & -1 & -1 & 13 & -1 & 250 & -1 & 249 & 176 & 274 & -1 & 129 & -1 & 54 & -1 & 224 & 68 & 205 & -1 & -1 & 20
\end{array}\right].
\end{equation}
\end{small}
\end{figure*}

We also investigate the performance of various LDPC codes via Monte-Carlo simulation.
We consider binary phase-shift keying (BPSK) transmission over the additive white Gaussian noise (AWGN) channels and the layered SPA\cite{Mansour2003cnLayer} is adopted at the receiver.
The maximum number of decoding iterations is set to 10.
 At least 100 frame errors are collected for each simulated signal-to-noise ratio point.

Fig. \ref{fig: FER} and Fig. \ref{fig: interation} show the frame error rate (FER) and the average number of decoding iterations of different QC-LDPC codes, respectively.
It can be seen that $\mb H_1$, $\mb H_2$ and $\mb H_3$ all outperform $\mb H_0 $ in terms of error correction capability and convergence speed.
Compared to $\mb H_1$, both $\mb H_2$ and $\mb H_3$ have slight performance degradation.
 This is reasonable since Strategies 2 and 3 are stricter than Strategy 1, which may eliminate some codes with good error-correction performance.

\section{Conclusion}

In this paper, we first formulated the PCM partitioning as an optimization problem for reducing the hardware complexity, which aims to minimize the maximum column  weight of each layer while maintaining a block cyclic shift property among different layers.
In particular, we  derived all the feasible solutions and proposed a tight lower bound $\omega_{LB}$ for the minimum possible maximum column weight to evaluate the quality of a solution.
Second, we reduced the computation delay of the layered decoding by considering the data dependency issue between consecutive layers.
More specifically, we illustrated how to obtain the optimal solutions with desired layer distance from those achieving the minimum value of the lower bound $\omega_{LB} = 1$.
Third, we demonstrated that up-to-now, there exist no algorithms to find an optimal solution with polynomial time complexity and alternatively proposed both greedy and enumerative partition algorithms.
Fourth, we modified the QC-PEG algorithm to directly construct PCMs that have a straightforward partition scheme to achieve $\omega_{LB} = 1$ or the desired layer distance.
Finally, we evaluated the performance of the proposed enumerative and greedy algorithms for partitioning the PCMs of the 5G LDPC codes.
For the cases where $\omega_{LB}$ or the desired layer distance is not achievable, we used the modified QC-PEG algorithm to construct two QC-LDPC codes with the same code parameters as that of a  5G LDPC code to achieve $\omega_{LB}$ and desired layer distance, respectively.
Simulation results shows that the constructed codes have better error correction performance and achieve less average number of iterations than the underlying 5G LDPC code.

\appendices


%
%

\ifCLASSOPTIONcaptionsoff
  \newpage
\fi

\bibliographystyle{IEEEtran}
\bibliography{myreference}

\end{document}